\documentclass[12pt]{article}

\usepackage[margin=1in]{geometry}
\usepackage{amsmath,amssymb,amsthm,bm}
\usepackage{graphicx}
\usepackage{natbib}
\usepackage{setspace}
\usepackage{authblk}
\usepackage{hyperref}
\hypersetup{colorlinks,allcolors=black}
\usepackage{url}
\usepackage{microtype}
\setlength{\emergencystretch}{3em}
\usepackage{booktabs}
\usepackage{placeins}
\usepackage{mathtools}
\usepackage{amsmath, amssymb, amsthm}

\DeclareMathOperator{\vect}{vec}

\newcommand{\Normal}{\mathcal{N}}

\newcommand{\alrinv}{\mathrm{alr}^{-1}}
\newcommand{\gfun}[2]{\mathbf{g}\!\left(#1,#2\right)}
\newcommand{\lpd}{\mathrm{lpd}}

\newcommand{\epsprob}{\varepsilon_{\text{prob}}}
\newcommand{\epsshape}{\varepsilon_{\text{shape}}}
\newcommand{\blind}{0}


\newcommand{\calF}{\mathcal{F}}

\theoremstyle{remark}
\newtheorem*{remark}{Remark}

\newtheorem{prop}{Proposition}
\newtheorem{lemma}{Lemma}

\newtheorem{thm}{Theorem}
\newtheorem{assumption}{Assumption}


\newcommand{\R}{\mathbb{R}}
\newcommand{\Dir}{\mathrm{Dir}}
\newcommand{\elpd}{\mathrm{ELPD}}

\newcommand{\norm}[1]{\left\lVert#1\right\rVert}
\newcommand{\E}{\mathbb{E}}
\newcommand{\alr}{\operatorname{alr}}

\newcommand{\bfmu}{\boldsymbol{\mu}}
\newcommand{\bfeta}{\boldsymbol{\eta}}
\newcommand{\bfbeta}{\boldsymbol{\beta}}
\newcommand{\bfgamma}{\boldsymbol{\gamma}}
\newcommand{\bfA}{\mathbf{A}}
\newcommand{\bfB}{\mathbf{B}}
\newcommand{\bfb}{\mathbf{b}}
\newcommand{\bfC}{\mathbf{C}}
\newcommand{\bfI}{\mathbf{I}}

\newcommand{\bfx}{\mathbf{x}}

\newcommand{\bfz}{\mathbf{z}}
\newcommand{\bfY}{\mathbf{Y}}
\newcommand{\bfy}{\mathbf{y}}
\newcommand{\bfw}{\mathbf{w}}

\setcitestyle{authoryear,open={(},close={)}}

\begin{document}

\if0\blind
{
  \title{\bf Centered-Innovation MA for Bayesian Dirichlet ARMA:\\
  Theoretical Equivalence and an Application to Bank-Asset Shares}
\author[1]{Harrison Katz}
\affil[1]{Forecasting, Data Science, Airbnb}
  \maketitle
} \fi

\if1\blind
{
  \bigskip
  \bigskip
  \bigskip
  \begin{center}
    {\LARGE\bf Centered-Innovation MA for Bayesian Dirichlet ARMA:\\
    Theoretical Equivalence and an Application to Bank-Asset Shares}
\end{center}
  \medskip
} \fi

\doublespacing

\begin{abstract}
We study a minimal change to an observation-driven Bayesian Dirichlet ARMA (B--DARMA) for compositional time series: replace the raw additive log-ratio (ALR) residual in the moving-average block with a \emph{centered} innovation that subtracts the Dirichlet conditional ALR mean, available in closed form via digamma identities. We prove a recursion-level first-order equivalence (in $1/\phi$) between the centered specification and a digamma-link DARMA at fixed parameters, under explicit interior and lag-stability conditions. The result clarifies why the two specifications should be predictively indistinguishable in the high-precision regime but does not by itself govern the geometry of the Bayesian posteriors that re-estimation produces. On weekly Federal Reserve H.8 bank-asset shares (October~2015 through October~2025, $T=522$ weeks), predictive performance is statistically indistinguishable across $104$ rolling weekly origins on every accuracy metric examined, while Hamiltonian Monte Carlo divergent transitions are approximately an order of magnitude more frequent under the raw specification, driven by isolated rolling fits at which the raw posterior exhibits localized pathologies. A four-reference sensitivity analysis confirms that predictive equivalence is reference-invariant and that the geometric advantage of centering is preserved across references but varies with the prevalence of pathological raw fits, from a substantial reduction at the loans reference to parity at the cash reference. The practical implication is operational rather than predictive: centering avoids the catastrophic raw-MA divergence spikes that occur at isolated rolling origins, which matters for production workflows in which posterior simulation feeds downstream stress tests. The adjustment is analytic and plug-in, and requires only a local change to the MA innovation calculation.
\end{abstract}

\noindent\textbf{Keywords:} Compositional time series; probabilistic forecasting; Dirichlet ARMA (B--DARMA); additive log-ratio (ALR); centered innovations; Hamiltonian Monte Carlo geometry; H.8 bank assets.

\section{Introduction}\label{sec:intro}

Compositional time series arise whenever a fixed total is allocated across categories through time. Finance and data science provide many examples: allocations of earned fees into future recognition buckets for planning and staffing, evolving market or sector shares in portfolio analytics, the distribution of transactions across settlement currencies that drives treasury, hedging, and consolidated reporting, and bank balance-sheet shares such as cash, securities, loans, and other assets in the Federal Reserve's H.8 release. The H.8 release, formally the Federal Reserve's \emph{Assets and Liabilities of Commercial Banks in the United States} statistical release, reports weekly aggregate balance-sheet data for domestically chartered commercial banks and U.S. branches of foreign banks. It is widely used by treasury and risk teams to track the evolving composition of bank assets and liabilities for liquidity planning, asset-and-liability management, and macroprudential analysis \citep{boardgovernors2024h8}. Valid forecasts of the asset-share composition must respect the simplex constraints, so they must remain nonnegative and sum to one.

Classical work maps compositions to Euclidean space with log ratios. Additive, centered, and isometric log-ratio transformations enable standard multivariate tools while preserving the subcompositional coherence that practitioners care about \citep{aitchison1982statistical, egoszcue2003isometric}. These mappings motivate transformed VARMA and state-space approaches across marketing, demographics, ecology, environmental science, and forecasting \citep{Cargnoni1997BayesianFO, ravishanker2001compositional, silva2001modelling, brunsdon1998time, mills2010forecasting, barcelo2011compositional, koehler2010forecasting, kynvclova2015modeling, snyder2017forecasting, al2018compositional}. Modeling directly on the simplex is an alternative that avoids ad hoc renormalization and yields coherent predictive distributions. For shares and market fractions, Dirichlet regression and its variants are widely used, and there is a growing literature on Dirichlet time series in both state-space and observation-driven forms \citep{hijazi2009modelling, grunwald1993time, da2011dynamic, da2015bayesian, zheng2017dirichlet, morais2018using, giller2020generalized, creus2021dirichlet, tsagris2018dirichlet}.

Within this class, the Bayesian Dirichlet ARMA framework (B--DARMA) evolves the Dirichlet mean on the additive log-ratio (ALR) scale with a VARMA process and has been used for forecasting lead times, investigating prior sensitivity, modeling energy mixes, and forecasting the evolving composition of inbound tourism demand from platform booking data \citep{KATZ20241556, forecast7030032, forecast7040062, katz2026forecastingevolvingcompositioninbound}. Time-varying precision accommodates volatility clustering on the simplex in a Dirichlet--ARCH spirit \citep{KATZ20261033}. These ideas connect to broader Bayesian time series references \citep{prado2010time, west1996bayesian} and to Bayesian VAR and VARMA models with shrinkage or stochastic volatility \citep{banbura2010large, karlsson2013forecasting, huber2019adaptive, kastner2020sparse}. They also sit alongside generalized linear time-series designs for non-Gaussian data \citep{brandt2012bayesian, roberts2002variational, chen2016generalized, mccabe2005bayesian, berry2020bayesian, fukumoto2019bayesian, silveira2015bayesian} and the volatility literature that motivates precision dynamics \citep{engle1982autoregressive, bollerslev1986generalized, nelson1991conditional, bauwens2006multivariate, engle2001theoretical, francq2019garch, silvennoinen2009multivariate, tsay2005analysis}.

A practical issue arises for moving-average terms under a Dirichlet likelihood. With finite precision, the conditional expectation of $\alr(\bfY_t)$ is a digamma function of the concentration parameters and is not equal to the linear predictor. The commonly used regressor $\alr(\bfy_t)-\bfeta_t$ therefore has nonzero conditional mean of order $O(\phi_t^{-1})$, which biases the conditional mean path and obscures the interpretation of MA coefficients. Frequentist Dirichlet ARMA designs sidestep this by using a digamma-based link whose inverse depends on precision and is not available in closed form \citep{zheng2017dirichlet}. We study a minimal fix that keeps the Dirichlet likelihood and the ALR link: replace the raw regressor with a centered innovation $\boldsymbol{\epsilon}_t^{\circ}=\alr(\bfy_t)-\E\{\alr(\bfY_t)\mid \bfmu_t,\phi_t\}$. The expectation has a closed form via digamma functions, so the centering is straightforward to compute, restores mean-zero innovations for the MA block, and delivers mean-consistent forecasts without changing the likelihood or requiring numerical inversion.

\paragraph{Evaluation.} We evaluate with predictive tools standard in Bayesian time series, summarizing out-of-sample fit with expected log predictive density \citep{vehtari2017practical, vehtari2012survey} and reporting interpretable point-error summaries for compositions. The empirical study focuses on public weekly H.8 bank-asset shares, compares Raw--MA and Centered--MA specifications under identical covariates and priors, and uses a $104$-origin rolling one-step evaluation under a seed-symmetric protocol described in Section~\ref{sec:spec-est}. We assess robustness through a four-reference sensitivity analysis on the same data (Section~\ref{sec:sensitivity}).

\paragraph{Software and reproducibility.} Implementation uses base R and \texttt{rstan} \citep{Rbase, Rstan}, following modern Bayesian forecasting workflows \citep{MTSPackage, hyndman2018forecasting}. All code, scripts, and instructions to reproduce the analysis are available at \url{https://github.com/harrisonekatz/centered-DARMA}; see Section~\ref{sec:code} for further details.

\paragraph{Practitioner relevance.} For banking decisions, the geometry result has direct consequences: substantially cleaner posterior simulation for the asset-mix dynamics, avoiding the catastrophic raw-MA divergence spikes that occur at isolated rolling origins, supports liquidity planning and asset-and-liability management workflows in which posterior simulation feeds downstream stress tests, and more reliable sampling improves the interpretability of MA cross-component effects between cash, securities, loans, and other categories.

\paragraph{Contribution.}
This paper makes three contributions. (i) Theoretically, the usual ALR residual in Dirichlet ARMA carries a nonzero conditional mean of order $O(\phi_t^{-1})$ under finite precision; we give an analytic centering that restores mean-zero MA shocks, and we prove that the resulting recursion is first-order equivalent in $1/\phi$ to a DARMA written on the digamma link, with explicit constants and a stability condition (Theorem~\ref{thm:equiv}). (ii) Under centered innovations, the one-step-ahead conditional mean depends only on shocks that are already observed at the forecast origin; future shocks integrate out to zero. This restores the standard ARMA forecast recursion and removes the bias term that arises under raw residuals (Proposition~\ref{prop:forecast}). (iii) Empirically, on weekly H.8 bank-asset shares with a four-reference sensitivity analysis, the two specifications produce statistically indistinguishable predictive distributions, while the centered specification produces a posterior that the Hamiltonian Monte Carlo sampler can navigate with substantially fewer divergent transitions at the loans reference, with the magnitude of the geometric advantage depending on the prevalence of localized posterior pathologies under the raw specification.

\section{Model and main properties}
\label{sec:model-theory}

\subsection{Model recap and centered innovations}\label{sec:model}
Let $\bfy_t=(y_{t1},\dots,y_{tJ})'$ be a $J$-part composition taking values in the simplex $\Delta^{J-1}=\{\bfy\in[0,1]^J:\sum_j y_j=1\}$. Let $K=J-1$. We model $\bfy_t$ conditional on its time-$t$ conditional mean $\bfmu_t=\E[\bfy_t\mid\calF_{t-1}]\in\Delta^{J-1}$ and a time-$t$ conditional precision $\phi_t>0$:
\[
\bfy_t \mid \calF_{t-1} \sim \Dir(\phi_t\bfmu_t),
\]
where $\Dir(\boldsymbol{\alpha})$ denotes the Dirichlet distribution with concentration vector $\boldsymbol{\alpha}=(\alpha_1,\ldots,\alpha_J)$, $\alpha_j>0$, and density $f(\bfy\mid\boldsymbol{\alpha})\propto\prod_{j=1}^{J}y_j^{\alpha_j-1}$ on $\Delta^{J-1}$. The conditional precision $\phi_t$ controls Dirichlet dispersion: holding $\bfmu_t$ fixed, larger $\phi_t$ implies tighter concentration around $\bfmu_t$.

We use the additive log-ratio (ALR) link to map $\Delta^{J-1}$ to $\R^K$ \citep{aitchison1982statistical}. Fix a reference component $j^\star\in\{1,\ldots,J\}$. The ALR transform is $\alr:\Delta^{J-1}\to\R^K$ with $\alr_j(\bfy)=\log(y_j/y_{j^\star})$ for $j\neq j^\star$, and its inverse is the softmax
\[
\alr^{-1}(\bfeta)_j=\frac{\exp(\eta_j)}{1+\sum_{k\neq j^\star}\exp(\eta_k)}\quad(j\neq j^\star),\qquad \alr^{-1}(\bfeta)_{j^\star}=\frac{1}{1+\sum_{k\neq j^\star}\exp(\eta_k)}.
\]
The linear predictor on the ALR scale is $\bfeta_t=\alr(\bfmu_t)\in\R^K$, with $\bfmu_t=\alr^{-1}(\bfeta_t)$.

We consider the observation-driven recursion
\begin{equation}
\label{eq:eta}
\bfeta_t
= \sum_{p=1}^P \bfA_p\{\alr(\bfy_{t-p})-\bfbeta\bfx_{t-p}\}
+ \sum_{q=1}^Q \bfB_q\,\boldsymbol{\epsilon}_{t-q}
+ \bfbeta\bfx_t,
\qquad
\log\phi_t=\bfz_t^\top\bfgamma,
\end{equation}
where $\bfA_p\in\R^{K\times K}$ are AR coefficient matrices for $p=1,\ldots,P$; $\bfB_q\in\R^{K\times K}$ are MA coefficient matrices for $q=1,\ldots,Q$; $\bfx_t\in\R^{R}$ is the mean-design column vector at time $t$ and $\bfbeta\in\R^{K\times R}$ is the mean-model coefficient matrix, so that $\bfbeta\bfx_t\in\R^K$ enters the ALR linear predictor (with $R=1$ and $\bfx_t\equiv 1$ in the empirical application, so $\bfbeta\in\R^{K\times 1}$ is just an intercept vector); $\bfz_t\in\R^{R_\phi}$ is the precision-design column vector and $\bfgamma\in\R^{R_\phi}$ is the precision-model coefficient vector, so that $\bfz_t^\top\bfgamma\in\R$ is scalar log-precision (with $R_\phi=2$ in the empirical application and $\bfz_t=(1,z_t)^\top$, where $z_t$ is a lagged smoothed measure of realized ALR volatility, Section~\ref{sec:zt}). The covariates $\bfx_t$ and $\bfz_t$ are bounded and $\calF_{t-1}$-measurable, where $\{\calF_t\}$ is the natural filtration generated by $\{\bfy_s:s\le t\}$; this allows them to be lagged functions of past observations, as in our use of a lagged realized-volatility regressor in $\bfz_t$, but rules out look-ahead. Under \eqref{eq:eta}, $(\bfmu_t,\phi_t)$ are $\calF_{t-1}$-measurable.

In the \emph{raw} B--DARMA, $\boldsymbol{\epsilon}_t^{\text{raw}}=\alr(\bfy_t)-\bfeta_t$ drives the MA block. We instead define the \emph{centered} innovation
\begin{equation}
\label{eq:centered}
\boldsymbol{\epsilon}_t^\circ \equiv \alr(\bfy_t) - \E\!\big[\alr(\bfY_t)\mid \bfeta_t,\phi_t\big]
= \alr(\bfy_t) - \mathbf{g}(\bfmu_t,\phi_t),
\end{equation}
where the conditional ALR mean under the Dirichlet is
\begin{equation}
\label{eq:g}
\mathbf{g}(\bfmu_t,\phi_t)_j=\psi(\phi_t\mu_{tj})-\psi(\phi_t\mu_{t j^\star}), \qquad j\neq j^\star,
\end{equation}
and $\psi(\cdot)$ is the digamma function. The Dirichlet observation density and the ALR link are common to the raw and centered specifications; the two specifications differ only in the MA regressor.

\subsection{Dirichlet log-moment identity and conditional ALR mean}
\begin{lemma}[Dirichlet log-moment identity]
\label{lem:dirichlet-log-moment}
If $\bfY\sim\Dir(\boldsymbol{\alpha})$ with $\alpha_0=\sum_{k=1}^J \alpha_k$, then
\[
\E[\log Y_j] = \psi(\alpha_j)-\psi(\alpha_0), \qquad j=1,\dots,J.
\]
\end{lemma}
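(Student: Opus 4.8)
The plan is to obtain the identity by differentiating the Dirichlet normalizing constant with respect to a single concentration parameter, which is the standard exponential-family device for reading off the mean of the sufficient statistic $\log Y_j$. First I would record the normalizer: integrating the unnormalized density over the simplex $\Delta^{J-1}$ gives $\int \prod_{k=1}^J y_k^{\alpha_k-1}\,d\mathbf{y} = B(\boldsymbol{\alpha})$, where $B(\boldsymbol{\alpha}) = \prod_{k=1}^J \Gamma(\alpha_k)\big/\Gamma(\alpha_0)$ is the multivariate Beta function and $\alpha_0 = \sum_k \alpha_k$. This is simply the statement that the Dirichlet density integrates to one, rearranged.

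Next I would differentiate both sides in $\alpha_j$. On the integral side, $\partial_{\alpha_j} y_k^{\alpha_k-1} = \delta_{jk}\,(\log y_j)\,y_j^{\alpha_j-1}$, so, granting the interchange of $\partial_{\alpha_j}$ and $\int$, the derivative equals $\int (\log y_j)\prod_k y_k^{\alpha_k-1}\,d\mathbf{y}$. On the Beta side, $\log B(\boldsymbol{\alpha}) = \sum_k \log\Gamma(\alpha_k) - \log\Gamma(\alpha_0)$, and since $\alpha_0$ depends on $\alpha_j$ with $\partial_{\alpha_j}\alpha_0 = 1$, we get $\partial_{\alpha_j}\log B(\boldsymbol{\alpha}) = \psi(\alpha_j) - \psi(\alpha_0)$, hence $\partial_{\alpha_j} B(\boldsymbol{\alpha}) = B(\boldsymbol{\alpha})\{\psi(\alpha_j)-\psi(\alpha_0)\}$. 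Equating the two derivatives and dividing through by $B(\boldsymbol{\alpha})$ turns the left side into $\E[\log Y_j]$ and leaves $\psi(\alpha_j) - \psi(\alpha_0)$ on the right, which is the claim.

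The only nontrivial step is justifying differentiation under the integral sign; everything else is bookkeeping with $\psi = (\log\Gamma)'$. I would handle it by noting that $\boldsymbol{\alpha}\mapsto \int \prod_k y_k^{\alpha_k-1}\,d\mathbf{y}$ is smooth on the open positive orthant: on any compact neighborhood of a fixed $\boldsymbol{\alpha}$ with all coordinates bounded away from $0$, the integrand and its $\alpha_j$-derivative $(\log y_j)\prod_k y_k^{\alpha_k-1}$ admit an integrable envelope, since the mild $|\log y_j|$ singularity is integrable against the $y_j^{\alpha_j-1}$ weight for $\alpha_j>0$, so dominated convergence licenses the interchange. As a cross-check and an alternative route, one can instead use the Gamma construction $Y_k = G_k/\sum_\ell G_\ell$ with independent $G_k\sim\mathrm{Gamma}(\alpha_k,1)$: independence of the total $S=\sum_\ell G_\ell\sim\mathrm{Gamma}(\alpha_0,1)$ from $\mathbf{Y}$ gives $\E[\log G_k] = \E[\log Y_k] + \E[\log S]$, and the scalar Gamma identity $\E[\log G_k]=\psi(\alpha_k)$, itself the same differentiation argument in one dimension, yields the result after subtracting $\E[\log S]=\psi(\alpha_0)$.
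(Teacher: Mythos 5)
Your proof is correct and follows essentially the same route as the paper's: both read off $\E[\log Y_j]$ by differentiating the log normalizer $\log B(\boldsymbol{\alpha})$ in $\alpha_j$, the paper phrasing this as the zero-mean score identity and you as differentiating the normalization integral directly, which are the same device. Your added justification of the interchange of derivative and integral, and the alternative Gamma-representation argument, are welcome but do not change the substance.
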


\begin{proof}
\[
\begin{aligned}
f(\mathbf y\mid\boldsymbol{\alpha})
&= \frac{\Gamma(\alpha_0)}{\prod_{k=1}^J \Gamma(\alpha_k)}
   \prod_{k=1}^J y_k^{\alpha_k-1},\qquad
\alpha_0=\sum_{k=1}^J \alpha_k;\\[0.25em]
0
&= \E\!\left[\frac{\partial}{\partial \alpha_j}\log f(\bfY\mid\boldsymbol{\alpha})\right]
 = -\,\frac{\partial}{\partial \alpha_j}\log B(\boldsymbol{\alpha})
   + \E[\log Y_j];\\
\frac{\partial}{\partial \alpha_j}\log B(\boldsymbol{\alpha})
&= \psi(\alpha_j)-\psi(\alpha_0)
\ \Rightarrow\ 
\E[\log Y_j]=\psi(\alpha_j)-\psi(\alpha_0).
\end{aligned}
\]\qedhere
\end{proof}

\begin{prop}[Conditional ALR mean]
\label{prop:conditional-alr-mean}
With $\bfY_t\mid \bfmu_t,\phi_t\sim\Dir(\phi_t\bfmu_t)$ and reference $j^\star$,
\[
\E\!\big[\alr(\bfY_t)\mid \bfmu_t,\phi_t\big]_j
=\psi(\phi_t\mu_{tj})-\psi(\phi_t\mu_{t j^\star})
\equiv \mathbf{g}(\bfmu_t,\phi_t)_j\quad (j\neq j^\star).
\]
Consequently, $\E\!\big[\alr(\bfY_t)\mid \calF_{t-1}\big]=\mathbf{g}(\bfmu_t,\phi_t)$.
\end{prop}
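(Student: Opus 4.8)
The plan is to reduce the proposition directly to Lemma~\ref{lem:dirichlet-log-moment} using nothing more than linearity of expectation. First I would unfold the ALR transform componentwise, writing $\alr_j(\bfY_t)=\log Y_{tj}-\log Y_{tj^\star}$ for $j\neq j^\star$, so that taking conditional expectations gives $\E[\alr(\bfY_t)\mid\bfmu_t,\phi_t]_j=\E[\log Y_{tj}\mid\bfmu_t,\phi_t]-\E[\log Y_{tj^\star}\mid\bfmu_t,\phi_t]$. This step isolates the problem into two univariate log-moments of the same Dirichlet vector, each of which is handled by the lemma.

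Next I would apply Lemma~\ref{lem:dirichlet-log-moment} with concentration vector $\boldsymbol{\alpha}=\phi_t\bfmu_t$, that is $\alpha_k=\phi_t\mu_{tk}$. The one point that warrants explicit attention is the value of $\alpha_0$: because $\bfmu_t\in\Delta^{J-1}$ sums to one, $\alpha_0=\sum_{k=1}^J\phi_t\mu_{tk}=\phi_t$. The lemma then yields $\E[\log Y_{tj}\mid\bfmu_t,\phi_t]=\psi(\phi_t\mu_{tj})-\psi(\phi_t)$, and identically for the reference index $j^\star$. Substituting both into the difference, the common $\psi(\phi_t)$ terms cancel, leaving $\psi(\phi_t\mu_{tj})-\psi(\phi_t\mu_{tj^\star})=\gfun{\bfmu_t}{\phi_t}_j$, exactly the claimed expression.

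For the final sentence I would invoke measurability: since $(\bfmu_t,\phi_t)$ are $\calF_{t-1}$-measurable under \eqref{eq:eta}--\eqref{eq:g}, conditioning on $\calF_{t-1}$ fixes $(\bfmu_t,\phi_t)$ while leaving only the fresh draw $\bfY_t$ random, so the tower property gives $\E[\alr(\bfY_t)\mid\calF_{t-1}]=\gfun{\bfmu_t}{\phi_t}$. I do not expect any genuine obstacle here; the result is essentially a one-line corollary of the log-moment identity. The only place to be deliberate is the cancellation of $\psi(\alpha_0)=\psi(\phi_t)$, which is precisely what makes the reference-component structure of the ALR map mesh cleanly with the digamma identity, and the measurability claim is a formality given the $\calF_{t-1}$-measurability already recorded in the setup.
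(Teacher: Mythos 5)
Your proposal is correct and follows exactly the paper's route: apply Lemma~\ref{lem:dirichlet-log-moment} with $\boldsymbol{\alpha}=\phi_t\bfmu_t$ (so $\alpha_0=\phi_t$), take the componentwise difference so the $\psi(\phi_t)$ terms cancel, and use $\calF_{t-1}$-measurability of $(\bfmu_t,\phi_t)$ for the final claim. You simply make explicit two details the paper leaves implicit, the cancellation of $\psi(\alpha_0)$ and the measurability step, so nothing further is needed.
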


\begin{proof}
By Lemma~\ref{lem:dirichlet-log-moment}, $\E[\log Y_{tj}\mid \bfmu_t,\phi_t]=\psi(\phi_t\mu_{tj})-\psi(\phi_t)$.
Hence $\E[\alr_j(\bfY_t)\mid \bfmu_t,\phi_t]=\psi(\phi_t\mu_{tj})-\psi(\phi_t\mu_{t j^\star})$.
\end{proof}

\subsection{Mean-zero innovations and forecast recursion}
\begin{prop}[Mean-zero MA innovations]
\label{prop:mds}
The centered innovation $\boldsymbol{\epsilon}_t^\circ$ in \eqref{eq:centered} satisfies $\E[\boldsymbol{\epsilon}_t^\circ\mid \calF_{t-1}]=\mathbf{0}$.
\end{prop}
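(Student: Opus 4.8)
The plan is to reduce the claim to the two facts already established in the section: the conditional ALR mean identity from Proposition~\ref{prop:conditional-alr-mean}, and the measurability statement that $(\bfmu_t,\phi_t)$ are $\mathcal{F}_{t-1}$–measurable under the recursion \eqref{eq:eta}. The whole argument is a single application of the tower/linearity properties of conditional expectation combined with the ``taking out what is known'' rule, so I expect it to be short.

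First I would condition on $\mathcal{F}_{t-1}$ and use linearity to split
\[
\E\!\big[\boldsymbol{\epsilon}_t^\circ \mid \mathcal{F}_{t-1}\big]
= \E\!\big[\alr(\bfY_t)\mid \mathcal{F}_{t-1}\big]
 - \E\!\big[\mathbf{g}(\bfmu_t,\phi_t)\mid \mathcal{F}_{t-1}\big].
\]
For the first term, I would invoke Proposition~\ref{prop:conditional-alr-mean} directly, which gives $\E[\alr(\bfY_t)\mid \mathcal{F}_{t-1}] = \mathbf{g}(\bfmu_t,\phi_t)$. For the second term, the key observation is that $\mathbf{g}(\bfmu_t,\phi_t)$ is a deterministic (and measurable) function of $(\bfmu_t,\phi_t)$, which are themselves $\mathcal{F}_{t-1}$–measurable; hence $\mathbf{g}(\bfmu_t,\phi_t)$ is $\mathcal{F}_{t-1}$–measurable and equals its own conditional expectation, so $\E[\mathbf{g}(\bfmu_t,\phi_t)\mid \mathcal{F}_{t-1}] = \mathbf{g}(\bfmu_t,\phi_t)$. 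Subtracting the two identical quantities yields $\mathbf{0}$.

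The main obstacle, such as it is, is entirely bookkeeping around measurability rather than any genuine analytic difficulty: I would want to note explicitly that $\mathbf{g}$ as defined in \eqref{eq:g} is a composition of the digamma function with the ALR-inverse softmax map, both continuous, so it preserves the $\mathcal{F}_{t-1}$–measurability of its arguments, justifying the ``pull out the known factor'' step. It is also worth remarking that the conclusion requires $\phi_t<\infty$, since the centering term $\mathbf{g}$ collapses to the linear predictor $\bfeta_t$ only in the $\phi_t\to\infty$ limit; for finite precision the digamma correction is exactly what makes the raw residual have nonzero conditional mean while the centered one does not. No probabilistic subtlety beyond these measurability checks should arise, so the proof closes immediately.
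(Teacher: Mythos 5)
Your proposal is correct and follows essentially the same route as the paper's proof: invoke Proposition~\ref{prop:conditional-alr-mean} for $\E[\alr(\bfY_t)\mid\mathcal{F}_{t-1}]$ and subtract the $\mathcal{F}_{t-1}$--measurable centering term $\mathbf{g}(\bfmu_t,\phi_t)$. The only difference is that you spell out the measurability bookkeeping that the paper leaves implicit (it is covered by the standing remark at the top of Section~\ref{sec:theory} that $(\bfmu_t,\phi_t)$ are $\mathcal{F}_{t-1}$--measurable), which is a harmless elaboration.
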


\begin{proof}
By Proposition~\ref{prop:conditional-alr-mean}, $\E[\alr(\bfY_t)\mid\calF_{t-1}]=\mathbf{g}(\bfmu_t,\phi_t)$.
Subtracting this conditional mean yields zero.
\end{proof}

\begin{prop}[Forecast recursion]\label{prop:forecast}
Write the recursion as
\[
\bfeta_t \;=\; \bfC_t \;+\; \sum_{q=1}^{Q} \bfB_q\,\boldsymbol{\epsilon}_{t-q},
\qquad \bfC_t \text{ is } \calF_{t-1}\text{-measurable},
\]
where $\bfC_t=\bfbeta\bfx_t+\sum_{p=1}^P \bfA_p\{\alr(\bfy_{t-p})-\bfbeta\bfx_{t-p}\}\in\R^K$ collects the AR and mean-covariate terms.
Let $\widehat{\bfeta}_{T+h\mid T}\equiv \E[\bfeta_{T+h}\mid \calF_T]$.

\medskip
\noindent\textbf{(i) Centered innovations.}
If $\boldsymbol{\epsilon}_t=\boldsymbol{\epsilon}_t^{\circ}$ with $\E[\boldsymbol{\epsilon}_t^{\circ}\mid \calF_{t-1}]=\mathbf{0}$, then for any $h\ge 1$,
\[
\boxed{\;\widehat{\bfeta}_{T+h\mid T}
= \E[\bfC_{T+h}\mid \calF_T]
\;+\; \sum_{q=h}^{Q} \bfB_q\,\boldsymbol{\epsilon}_{T+h-q}^{\circ}\;}
\]
(with the sum understood as $\mathbf{0}$ when $h>Q$). Only already-realized shocks enter the mean path.

\medskip
\noindent\textbf{(ii) Raw residuals.}
If $\boldsymbol{\epsilon}_t=\boldsymbol{\epsilon}_t^{\mathrm{raw}}=\alr(\bfy_t)-\bfeta_t$, then with
$\bfb_t \equiv \mathbf{g}(\bfmu_t,\phi_t)-\bfeta_t\in\R^K$,
\[
\boxed{\;\widehat{\bfeta}_{T+h\mid T}
= \E[\bfC_{T+h}\mid \calF_T]
\;+\; \sum_{q=h}^{Q} \bfB_q\,\boldsymbol{\epsilon}_{T+h-q}^{\mathrm{raw}}
\;+\; \sum_{q=1}^{\min(Q,h-1)} \bfB_q\,\E\!\big[\bfb_{T+h-q}\,\big|\,\calF_T\big]\;}
\]
so future raw residuals contribute via their nonzero conditional mean $\bfb_t$.
\end{prop}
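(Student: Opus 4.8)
The plan is to apply the law of iterated expectations to the stated recursion and to split the moving-average sum according to which shocks are already observed at the forecast origin $T$. First I would take $\mathbb{E}[\cdot\mid\mathcal{F}_T]$ of $\eta_{T+h}=C_{T+h}+\sum_{q=1}^{Q}B_q\,\epsilon_{T+h-q}$ and use linearity, leaving the drift as $\mathbb{E}[C_{T+h}\mid\mathcal{F}_T]$. The MA index $T+h-q$ is at most $T$ exactly when $q\ge h$; for those terms $\epsilon_{T+h-q}$ is $\mathcal{F}_T$-measurable, so the conditional expectation passes through and returns the shock itself, contributing $\sum_{q=h}^{Q}B_q\,\epsilon_{T+h-q}$ (an empty sum when $h>Q$). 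The remaining terms, indexed by $1\le q\le\min(Q,h-1)$, involve \emph{future} shocks at times $s=T+h-q>T$, and this is exactly where the two specifications diverge.

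For the future shocks I would invoke the tower property with the nesting $\mathcal{F}_T\subseteq\mathcal{F}_{s-1}$, which is valid because $s>T$ forces $s-1\ge T$. In the centered case, $\mathbb{E}[\epsilon_s^{\circ}\mid\mathcal{F}_T]=\mathbb{E}\big[\mathbb{E}[\epsilon_s^{\circ}\mid\mathcal{F}_{s-1}]\mid\mathcal{F}_T\big]=0$ by Proposition \ref{prop:mds}, so every future term vanishes and only realized shocks survive, yielding (i). In the raw case the inner expectation is no longer zero: since $\eta_s$, and hence $\mu_s=\operatorname{alr}^{-1}(\eta_s)$ and $\phi_s$, are $\mathcal{F}_{s-1}$-measurable, Proposition \ref{prop:conditional-alr-mean} gives $\mathbb{E}[\epsilon_s^{\mathrm{raw}}\mid\mathcal{F}_{s-1}]=g(\mu_s,\phi_s)-\eta_s=b_s$, whence $\mathbb{E}[\epsilon_s^{\mathrm{raw}}\mid\mathcal{F}_T]=\mathbb{E}[b_s\mid\mathcal{F}_T]$. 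Summing these over $1\le q\le\min(Q,h-1)$ and adding the realized-shock block produces (ii).

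The genuinely substantive step, as opposed to the bookkeeping, is justifying the inner conditioning: I must confirm that $\eta_s$ is $\mathcal{F}_{s-1}$-measurable for future $s$, so that $g(\mu_s,\phi_s)$ is a well-defined $\mathcal{F}_{s-1}$-measurable quantity and Proposition \ref{prop:conditional-alr-mean} may be applied inside the expectation. This follows from the recursion itself: $C_s$ is $\mathcal{F}_{s-1}$-measurable by hypothesis and each lagged shock $\epsilon_{s-q}$ with $q\ge1$ is $\mathcal{F}_{s-q}\subseteq\mathcal{F}_{s-1}$-measurable, so $\eta_s$ is a measurable function of information up to $s-1$. I expect this measurability check, together with the boundary case $q=h$ and the empty-sum conventions, to be the only place requiring care; everything else is linearity plus the tower property. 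I would also note that the raw-case answer is deliberately left in terms of $\mathbb{E}[b_{T+h-q}\mid\mathcal{F}_T]$ rather than fully unrolled, which is all the statement asserts and avoids iterating the nonlinear map $\eta\mapsto b$ forward in time.
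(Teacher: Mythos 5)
Your proposal is correct and follows essentially the same route as the paper's proof: expand $\eta_{T+h}$, split the MA sum at $q=h$ into realized versus future shocks, and apply the tower property with $\mathcal F_T\subseteq\mathcal F_{s-1}$ so that future centered shocks vanish by Proposition~\ref{prop:mds} while future raw residuals contribute $\mathbb E[b_{T+h-q}\mid\mathcal F_T]$ via Proposition~\ref{prop:conditional-alr-mean}. The measurability check you flag is the same fact the paper records at the start of Section~\ref{sec:theory} (that $(\bfmu_t,\phi_t)$ are $\mathcal F_{t-1}$-measurable), so nothing is missing.
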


\begin{proof}
Fix $h\ge1$. Expanding at $T+h$ gives
\[
\bfeta_{T+h}
= \bfC_{T+h} + \sum_{q=1}^{Q} \bfB_q\,\boldsymbol{\epsilon}_{T+h-q}
= \bfC_{T+h}
+ \underbrace{\sum_{q=h}^{Q} \bfB_q\,\boldsymbol{\epsilon}_{T+h-q}}_{\text{indices }\le T}
+ \underbrace{\sum_{q=1}^{\min(Q,h-1)} \bfB_q\,\boldsymbol{\epsilon}_{T+h-q}}_{\text{indices }>T}.
\]
Taking $\E[\cdot\mid\calF_T]$,
\[
\widehat{\bfeta}_{T+h\mid T}=\E[\bfC_{T+h}\mid\calF_T]
+ \sum_{q=h}^{Q} \bfB_q\,\boldsymbol{\epsilon}_{T+h-q}
+ \sum_{q=1}^{\min(Q,h-1)} \bfB_q\,\E[\boldsymbol{\epsilon}_{T+h-q}\mid\calF_T].
\]
For centered innovations, $\E[\boldsymbol{\epsilon}_t^{\circ}\mid \calF_{t-1}]=\mathbf{0}$ (Proposition~\ref{prop:mds}),
and by the tower property with $\calF_T\subseteq \calF_{t-1}$ for $t>T$,
$\E[\boldsymbol{\epsilon}_{T+h-q}^{\circ}\mid \calF_T]=\mathbf{0}$ for $q\le h-1$, yielding (i).
For raw residuals, $\E[\boldsymbol{\epsilon}_t^{\mathrm{raw}}\mid \calF_{t-1}]=\bfb_t$,
so $\E[\boldsymbol{\epsilon}_{T+h-q}^{\mathrm{raw}}\mid \calF_T]
=\E[\bfb_{T+h-q}\mid \calF_T]$, yielding (ii).
\end{proof}

\begin{remark}[One-step case]
For $h=1$,
\[
\widehat{\bfeta}_{T+1\mid T}
= \E[\bfC_{T+1}\mid \calF_T]
+ \sum_{q=1}^{Q} \bfB_q\,\boldsymbol{\epsilon}_{T+1-q}^{\circ},
\]
so all $Q$ past shocks $\{\boldsymbol{\epsilon}_T^{\circ},\boldsymbol{\epsilon}_{T-1}^{\circ},\dots,\boldsymbol{\epsilon}_{T+1-Q}^{\circ}\}$
enter the forecast; none are dropped. In the raw case, the bias correction sum is empty at one step.
\end{remark}

\begin{remark}[Forecast interpretation for practitioners]
Under centered innovations, the one-step mean depends only on already-observed shocks; future shocks integrate out to zero. For multi-step means, the MA block contributes only through shocks that are already realized by the forecast origin. This ensures ARMA-consistent mean forecasts on the ALR scale without changing the Dirichlet likelihood or the inverse link back to $\bfmu_t$.
\end{remark}

\subsection{Recursion-level first-order equivalence}\label{sec:equiv}

The next two results formalize the sense in which the centered specification is predictively equivalent to the digamma-link DARMA in the high-precision regime.

\begin{lemma}[Digamma-ALR expansion]
\label{lem:expansion}
Fix $\epsilon>0$. Uniformly over $\{\bfmu\in\Delta^{J-1}:\min_{1\le j\le J}\mu_j\ge\epsilon\}$, for each $j\neq j^\star$,
\[
\psi(\phi\mu_j)-\psi(\phi\mu_{j^\star})
= \log\frac{\mu_j}{\mu_{j^\star}}
-\frac{1}{2\phi}\!\left(\frac{1}{\mu_j}-\frac{1}{\mu_{j^\star}}\right)
+ O(\phi^{-2}),\qquad \phi\to\infty.
\]
Equivalently, $\mathbf{g}(\bfmu,\phi)_j=\alr_j(\bfmu)+O(\phi^{-1})$.
\end{lemma}

\begin{proof}
Apply the standard expansion $\psi(x)=\log x-\tfrac{1}{2x}+O(x^{-2})$ to $x=\phi\mu_j$ and $x^\star=\phi\mu_{j^\star}$. Under $\min_j \mu_j\ge\epsilon$, both arguments are at least $\phi\epsilon$, so the remainder bound is uniform with constant proportional to $\epsilon^{-2}$.
\end{proof}

We now state the recursion-level equivalence. Let $K=J-1$, let $\bfw_t=\alr(\bfy_t)$, and define the digamma-link evaluation $H_t(\eta)\equiv \mathbf{g}\{\alrinv(\eta),\phi_t\}$. Consider the centered-innovation ALR recursion
\begin{equation}\label{eq:centered-recursion}
\eta_t^{C}
= \bfC_t+
\sum_{q=1}^Q \bfB_q
\{\bfw_{t-q}-H_{t-q}(\eta_{t-q}^{C})\},\tag{C}
\end{equation}
where $\bfC_t=\bfbeta\bfx_t+\sum_{p=1}^P \bfA_p\{\bfw_{t-p}-\bfbeta\bfx_{t-p}\}$ is the AR-plus-mean part. Consider also the digamma-link DARMA recursion
\begin{equation}\label{eq:digamma-recursion}
\zeta_t = \bfC_t + \sum_{q=1}^Q \bfB_q\,(\bfw_{t-q}-\zeta_{t-q}),\tag{D}
\end{equation}
with the same $\bfC_t$, $\bfB_q$, precision path $\phi_t$, and observed lag sequence $\bfw_t$. Let $m=\max\{P,Q\}$, $\mathcal{W}_T=\{1-m,\ldots,T\}$, $\mathcal{I}_T=\{1-Q,\ldots,T\}$, and $r_T=\max_{s\in\mathcal{I}_T}\phi_s^{-1}$.

\begin{assumption}[Pathwise interior condition]\label{ass:interior}
There exists $\epsilon>0$, independent of $t$ and $T$, such that the realized centered mean path $\bfmu_s^{C}=\alrinv(\eta_s^{C})$ satisfies $\min_{s\in\mathcal{I}_T}\min_{1\le j\le J}\mu_{sj}^{C}\ge\epsilon$.
\end{assumption}

\begin{assumption}[High-precision regime]\label{ass:phi}
$r_T=\max_{s\in\mathcal{I}_T}\phi_s^{-1}\to 0$.
\end{assumption}

\begin{assumption}[Lag-stability of the discrepancy recursion]\label{ass:stab}
Let
\[
\mathcal{B}=\begin{pmatrix}
-\bfB_1 & -\bfB_2 & \cdots & -\bfB_{Q-1} & -\bfB_Q\\
\bfI_K  & 0    & \cdots & 0        & 0\\
0    & \bfI_K  & \cdots & 0        & 0\\
\vdots & \vdots & \ddots & \vdots & \vdots\\
0    & 0    & \cdots & \bfI_K      & 0
\end{pmatrix}\in\R^{KQ\times KQ},
\]
with the obvious reduction when $Q=1$. Assume $\rho(\mathcal{B})<1$.
\end{assumption}

\begin{assumption}[Common initialization]\label{ass:init}
$\zeta_s=\eta_s^{C}$ for $s=1-Q,\ldots,0$.
\end{assumption}

\begin{thm}[Recursion-level first-order equivalence]\label{thm:equiv}
Under Assumptions~\ref{ass:interior}--\ref{ass:init}, the centered-innovation ALR recursion \eqref{eq:centered-recursion} and the digamma-link DARMA recursion \eqref{eq:digamma-recursion} satisfy
\[
\max_{1\le t\le T}\norm{\zeta_t-\eta_t^{C}} = O(r_T)
\quad\text{and}\quad
\max_{1\le t\le T}\norm{(\bfw_t-\zeta_t)-\{\bfw_t-H_t(\eta_t^{C})\}}=O(r_T),
\]
where the constants depend on $\epsilon$, $J$, $K$, $Q$, and the matrices $\bfB_q$, but not on $t$ or $T$.
\end{thm}

\begin{proof}
By Lemma~\ref{lem:expansion} and Assumption~\ref{ass:interior}, uniformly over $s\in\mathcal{I}_T$,
$d_s\equiv H_s(\eta_s^{C})-\eta_s^{C}=O(\phi_s^{-1})$, with a finite constant $M$ independent of $s$ and $T$ such that $\norm{d_s}\le M\phi_s^{-1}\le Mr_T$. Under Assumption~\ref{ass:init}, $\zeta_s=\eta_s^{C}$ for the initialization lags, so $H_s(\eta_s^{C})-\zeta_s=d_s$ there.

Adding $d_t$ to both sides of \eqref{eq:centered-recursion} and using $H_t(\eta_t^C)=\eta_t^C+d_t$ yields
\[
H_t(\eta_t^{C})=\bfC_t+\sum_{q=1}^Q \bfB_q\{\bfw_{t-q}-H_{t-q}(\eta_{t-q}^{C})\}+d_t.
\]
Defining $\delta_t\equiv H_t(\eta_t^{C})-\zeta_t$ and subtracting \eqref{eq:digamma-recursion} gives
\begin{equation}\label{eq:delta-recursion}
\delta_t=d_t-\sum_{q=1}^Q \bfB_q\,\delta_{t-q}.
\end{equation}
Stack $\bm{\delta}_t=(\delta_t',\delta_{t-1}',\ldots,\delta_{t-Q+1}')'$. Equation~\eqref{eq:delta-recursion} reads $\bm{\delta}_t=\mathcal{B}\,\bm{\delta}_{t-1}+v_t$ with $v_t=(d_t',0,\ldots,0)'$. Iterating,
\[
\bm{\delta}_t=\mathcal{B}^{t}\bm{\delta}_0+\sum_{s=1}^{t}\mathcal{B}^{t-s}v_s.
\]
By Assumption~\ref{ass:stab}, $\rho(\mathcal{B})<1$, so there exist $C_B<\infty$ and $\varrho_B\in(0,1)$ with $\norm{\mathcal{B}^\ell}\le C_B\varrho_B^\ell$. Combining with $\norm{v_s}\le Mr_T$,
\[
\norm{\bm{\delta}_t}\le C_B\norm{\bm{\delta}_0}+\frac{C_B M}{1-\varrho_B}\,r_T.
\]
The initialization gives $\norm{\bm{\delta}_0}=O(r_T)$, so $\sup_{1\le t\le T}\norm{\delta_t}=O(r_T)$.

Finally, $\zeta_t-\eta_t^{C}=-\delta_t+d_t$, and $\norm{d_t}\le Mr_T$, so $\sup_{1\le t\le T}\norm{\zeta_t-\eta_t^{C}}=O(r_T)$. The innovation difference is identically $\delta_t$, completing the proof.
\end{proof}

\begin{remark}[Pathwise nature of the result]
The proof is pathwise. The observed sequence $\bfw_s$, covariates $\bfx_s$ and $\bfz_s$, precision path $\phi_s$, and initial states are treated as fixed inputs; no stationarity, ergodicity, or coupling argument is used. The constants depend on the model dimension, the stability constant for $\mathcal{B}$, and the interior bound $\epsilon$, but not on $t$ or $T$.
\end{remark}

\begin{remark}[Role of the lag-stability condition]
The matrix $\mathcal{B}$ is the companion matrix for the discrepancy recursion \eqref{eq:delta-recursion}. The condition $\rho(\mathcal{B})<1$ ensures that an additive input of size $O(r_T)$ produces a state perturbation of the same order, rather than accumulating to order one. The AR coefficients $\bfA_p$ do not appear in $\mathcal{B}$ because they are absorbed in the common forcing term $\bfC_t$ and cancel when the two recursions are subtracted.
\end{remark}

\subsection{Scope of theoretical results}\label{sec:scope}

Theorem~\ref{thm:equiv} is a recursion-level statement: it conditions on the realized covariate, precision, and observation paths, and establishes that the centered and digamma-link recursions agree to first order in $1/\phi$. We do not address full primitive conditions for stationarity and ergodicity of observation-driven Dirichlet ARMA models, which is an open problem beyond the special cases established by \citet{zheng2017dirichlet}; see also the broader stability literature for related conditions in observation-driven models. Assumption~\ref{ass:interior} is a pathwise condition that the realized centered mean path stays bounded away from the simplex boundary; it can be checked directly on the fitted $\bfmu_t^{C}$ paths and held in the H.8 fits reported below.

The result is an $O(r_T)$ equivalence at the recursion level. It does not by itself imply second-order agreement, likelihood equivalence, or equality of full predictive distributions. The leading correction is $-\tfrac{1}{2\phi_t}(1/\mu_j-1/\mu_{j^\star})$, so the discrepancy is naturally of first order in $1/\phi_t$. The empirical implication is that one-step predictive performance should be statistically indistinguishable across the two specifications in the high-precision regime, which is what we observe in the H.8 application across all four ALR references.

We emphasize what the theorem does and does not establish for the empirical comparison that follows. Theorem~\ref{thm:equiv} holds at fixed parameters and for a fixed realized path; it is a statement about how the centered-innovation ALR recursion relates to the digamma-link DARMA recursion when both are evaluated on the same inputs. The empirical Bayesian comparison in Section~\ref{sec:results} differs from this setup in two ways: each rolling fit re-estimates parameters separately under the centered and raw specifications, and the raw specification uses the ALR identity link rather than a digamma link, so the two re-estimated posteriors are not literally evaluating the same recursion at common parameters. Theorem~\ref{thm:equiv} therefore does not directly imply that the centered-MA and raw-MA \emph{re-estimated} posterior predictive distributions are equivalent. It does imply that, holding parameters fixed, the recursion-level discrepancy is $O(1/\phi)$ and the raw specification carries a finite-precision conditional-mean bias that the centered specification removes. Empirical equivalence of re-estimated rolling predictive distributions is the expected outcome under this theoretical picture when $\phi_t$ stays well above unity and when the Bayesian re-estimation is not pushed into pathological regions; it is consistent with the theorem but is established empirically rather than analytically. Conversely, the divergence asymmetry between specifications is a posterior-geometry phenomenon that arises during re-estimation and is not directly governed by Theorem~\ref{thm:equiv}; the theorem rationalizes why a difference might be expected (the raw recursion carries a deterministic shift that the priors do not anticipate) but does not predict its magnitude.

\section{Case study: forecasting weekly bank-asset shares}

Weekly H.8 balance-sheet shares are a clean and useful test bed for the centered MA construction. The series are public, well curated, and available at a high enough frequency to support rolling origins. The composition has four interpretable parts that matter for risk, liquidity, and income. The last decade contains calm periods and shocks, which produces the kind of time variation in precision where centering should matter. Our empirical questions are simple. First, do the two specifications agree on one-step density forecasts on the simplex, consistent with the recursion-level result in Theorem~\ref{thm:equiv}. Second, are computational diagnostics cleaner under centering when we hold priors, regressors, and sampler settings fixed and remove auto-refit interventions. Third, do these patterns hold up in a rolling evaluation with many fits.

\subsection{Data and preprocessing}\label{sec:data}

We use weekly, seasonally adjusted (SA) H.8 series from the Federal Reserve Bank of St.\ Louis (FRED). Our primary identifiers are \texttt{TLAACBW027SBOG} for total assets, \texttt{CASACBW027SBOG} for cash assets, and \texttt{SBCACBW027SBOG} for securities in bank credit, with an automatic fallback to the non-seasonally adjusted series \texttt{SBCACBW027NBOG} if the SA series is unavailable on a given run. To maintain consistent seasonal treatment across the composition, if the SA version of \emph{any} component is unavailable at runtime, we switch \emph{all} components to their NSA counterparts for that run; otherwise we use SA for all components. For loans we first attempt \texttt{TOTLL} and otherwise use \texttt{LLBACBW027SBOG}. We download each series as a CSV from FRED, parse the date column (\texttt{DATE}/\texttt{date}/\texttt{observation\_date}), and perform an inner join on calendar weeks to ensure a common support across all components.

For the manuscript results we fix the analysis window to October~7, 2015 through October~1, 2025 ($T=522$ weekly observations) and assert these endpoints explicitly in the data-loading step; the run fails fast if the realized window does not match. We retain a dynamic alternative ($\max_t \text{date}_t-10\,\text{years}$) in the code only for exploratory reruns; it is not used for the tables and figures reported here. Because the FRED H.8 panel is occasionally revised, we provide a frozen snapshot of the four input CSVs in the public repository alongside the analysis scripts, sufficient to reproduce every reported number exactly.

Let $x_{t,\text{tot}}, x_{t,\text{cash}}, x_{t,\text{secr}}, x_{t,\text{loans}}$ denote the aligned level series. We define the residual ``Other'' level by
\[
x_{t,\text{other}} \equiv x_{t,\text{tot}} - x_{t,\text{cash}} - x_{t,\text{secr}} - x_{t,\text{loans}}.
\]
As a data integrity check, we count rows with $x_{t,\text{other}}<0$. If more than $5\%$ of weeks are negative we abort the build and prompt the user to revisit the security/loan ID choices or SA/NSA consistency; otherwise we proceed, issuing a warning with the observed fraction.

We convert levels to raw shares by division through total assets and then enforce strict positivity with an explicit floor before renormalizing rows. Let
\[
\bfy^{\text{raw}}_t=\Big(\frac{x_{t,\text{cash}}}{x_{t,\text{tot}}},\ \frac{x_{t,\text{secr}}}{x_{t,\text{tot}}},\ \frac{x_{t,\text{loans}}}{x_{t,\text{tot}}},\ \frac{x_{t,\text{other}}}{x_{t,\text{tot}}}\Big)\in[0,1]^4,
\]
and set the probability floor to $\epsprob\equiv 10^{-10}$. We apply the floor componentwise, $\tilde y_{tj}=\max\{y_{tj}^{\text{raw}},\epsprob\}$, and renormalize $\bfy_t=\tilde\bfy_t/\sum_{j=1}^4\tilde y_{tj}$, so that $\bfy_t\in\Delta^3$ and every entry is strictly positive. The total injected mass per row is at most $4\times 10^{-10}$.

\subsection{Exploratory composition dynamics}\label{sec:eda}

\begin{figure}[!t]
  \centering
  \includegraphics[width=\textwidth]{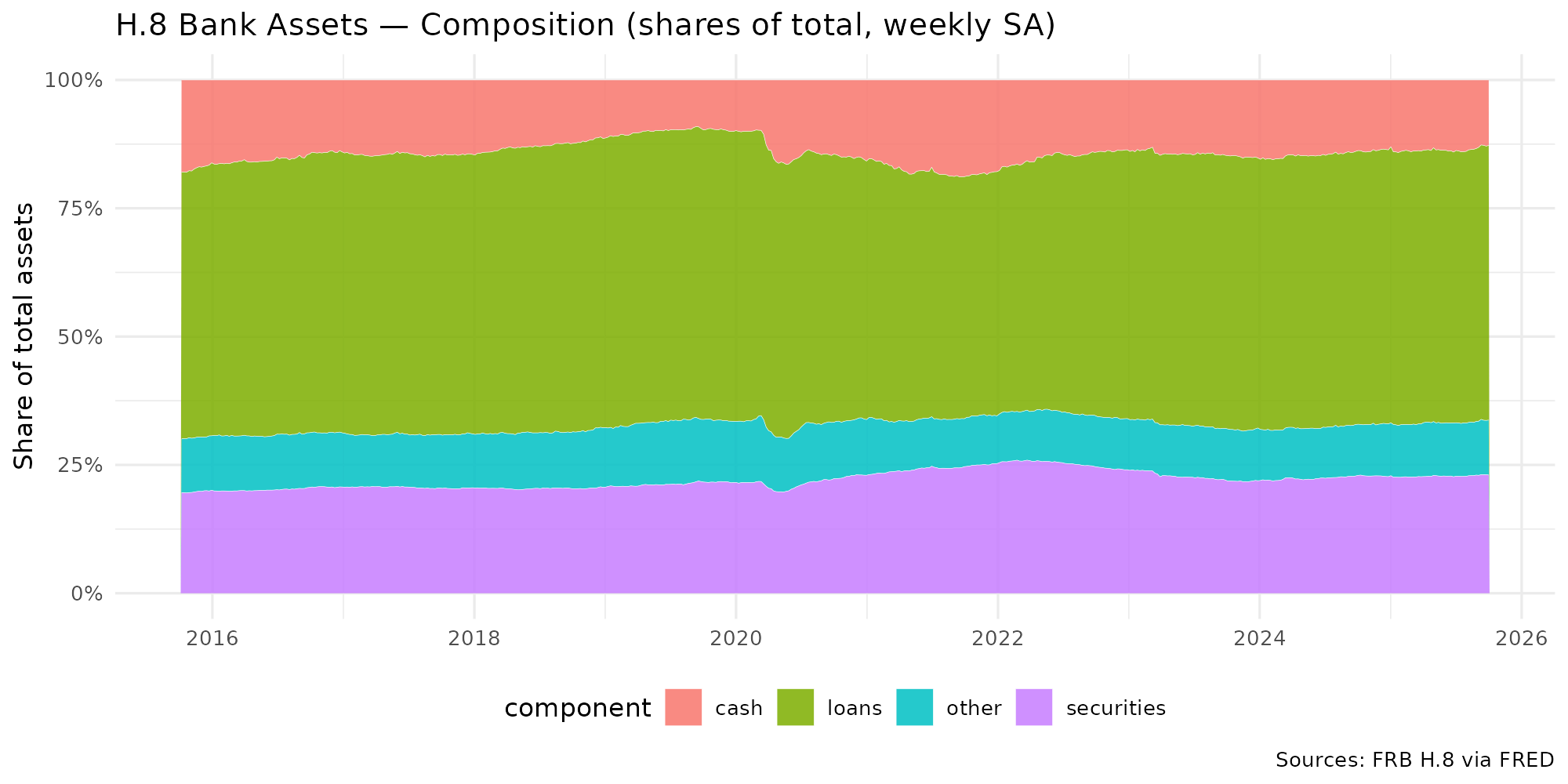}
  \caption{H.8 bank assets as weekly shares of total assets. Shaded bands show cash, securities, loans, and the residual other category over the last decade.}
  \label{fig:h8-stack}
\end{figure}

Figure~\ref{fig:h8-stack} plots the weekly shares of cash, securities, loans, and other assets in the H.8 aggregate over the last decade. Loans dominate the balance sheet throughout, generally near three fifths of total assets, with a visible compression around early 2020 followed by a partial recovery. Cash rises from approximately $12\%$ before 2020 to a peak of roughly $22\%$ by mid-2022, then settles in the $16$--$19\%$ range through 2025. Securities hover in a narrow band of approximately $12$--$14\%$ with mild drift, while the residual \emph{Other} moves mechanically against loans. The ranking of components is stable over time and there are no boundary touches, so the probability floor imposed during construction does not bind in economically relevant weeks.

Two features in the figure motivate our specification choices. The first is the combination of smooth, low-frequency reallocations with episodic realignments, most notably in 2020 and again during pockets of volatility in 2025. This is where an AR(1) mean with an MA(1) shock on the ALR scale is useful. The AR captures persistent rebalancing and the MA soaks up short transients. The second is that realized ALR step sizes spike in those same episodes, which is why we drive the Dirichlet precision with a lagged, smoothed measure of ALR volatility computed without look-ahead. The dominance and persistence of the loans share supports choosing loans as the ALR reference $j^\star$, which stabilizes the transformed coordinates by anchoring them to the largest component.

\subsection{Precision regressor construction}\label{sec:zt}

The Dirichlet precision is time-varying and driven by a two-regressor design,
\[
\log\phi_t = \gamma_0 + \gamma_1 z_t,
\]
where $z_t$ is a lagged, smoothed measure of realized ALR volatility. We compute one-step ALR increments as
\[
\Delta\bfeta_t = \bfeta_t-\bfeta_{t-1}\quad(t\ge 2),\qquad \Delta\bfeta_1\equiv\bfeta_1,
\]
and summarize them by a root-mean-square $r_t\equiv\sqrt{(1/K)\sum_{k=1}^K(\Delta\eta_{t,k})^2}$. We then apply a one-sided four-week trailing mean with equal weights, $\bar r_t^{(4)}\equiv\tfrac{1}{4}\sum_{h=0}^3 r_{t-h}$, implemented with a past-only filter; initial undefined values from the filter are filled with the first non-missing value. To avoid look-ahead we lag the smoother by one week, $z_t^{\text{vol}}\equiv\bar r_{t-1}^{(4)}$ for $t\ge 2$ and $z_1^{\text{vol}}\equiv\bar r_1^{(4)}$.

To stabilize estimation and prevent leakage from the test period, we standardize $z_t^{\text{vol}}$ using only the training window: with $\bar z_{\text{train}}$ and $s_{\text{train}}$ the mean and standard deviation of $z_t^{\text{vol}}$ over $t\le T_{\text{train}}$,
\[
z_t \equiv \frac{z_t^{\text{vol}}-\bar z_{\text{train}}}{s_{\text{train}}},
\]
with $s_{\text{train}}:=1$ if $s_{\text{train}}\le 0$ or not finite. The intercept and $z_t$ together form the precision-design vector $\bfz_t=(1,z_t)^\top\in\R^{R_\phi}$ with $R_\phi=2$. Note that $z_t$ is a function of past observations only and is therefore $\calF_{t-1}$-measurable, as required by the framework of Section~\ref{sec:model-theory}.

\paragraph{Implementation constants.} We enforce strict positivity on the simplex with a probability floor $\epsprob=10^{-10}$ before renormalization. We also guard Dirichlet shape parameters with a floor $\epsshape=10^{-10}$. These constants are used when computing the digamma centering term, evaluating log predictive densities, and simulating predictive draws; they only safeguard numerics and do not bind at economically relevant scales.

\subsection{Competing specifications, estimation, and scoring}\label{sec:spec-est}

We estimate two observation-driven Dirichlet models that are identical in likelihood, link, covariates, and priors, and differ only in the moving-average regressor. Let $J=4$ and $K=J-1=3$. With loans as the ALR reference $j^\star=\text{loans}$, define $\bfeta_t=\alr(\bfy_t)\in\R^K$ and $\bfmu_t=\alrinv(\bfeta_t)\in\Delta^{J-1}$. The observation equation is
\[
\bfy_t \mid \bfmu_t,\phi_t \sim \Dir\!\big(\phi_t\bfmu_t\big),\qquad
\log\phi_t=\bfz_t^\top\bfgamma,\quad \bfz_t=(1,z_t)^\top.
\]
The mean recursion is a one-lag DARMA on the ALR scale with intercept-only $\bfx_t\equiv 1$; write $\bfbeta\in\R^{K\times 1}$ and $\bfA_1,\bfB_1\in\R^{K\times K}$. The AR block is common: $\bfA_1\{\alr(\bfy_{t-1})-\bfbeta\bfx_{t-1}\}$. The MA block differs as follows.

\paragraph{Raw-MA B--DARMA.}
\[
\bfeta_t = \bfbeta\bfx_t + \bfA_1\{\alr(\bfy_{t-1})-\bfbeta\bfx_{t-1}\}+ \bfB_1\{\alr(\bfy_{t-1})-\bfeta_{t-1}\}.
\]

\paragraph{Centered-MA B--DARMA.}
\[
\bfeta_t = \bfbeta\bfx_t + \bfA_1\{\alr(\bfy_{t-1})-\bfbeta\bfx_{t-1}\}+ \bfB_1\,\boldsymbol{\epsilon}_{t-1}^{\circ},\quad
\boldsymbol{\epsilon}_{t-1}^{\circ}\equiv \alr(\bfy_{t-1})-\gfun{\bfmu_{t-1}}{\phi_{t-1}},
\]
with $\gfun{\bfmu}{\phi}_j=\psi(\phi\mu_j)-\psi(\phi\mu_{j^\star})$ the Dirichlet ALR mean from \eqref{eq:g}. The likelihood, link, and inverse link are otherwise identical.

\paragraph{Priors and fixed numerical guards.}
Elementwise $\vect(\bfA_1)\sim\Normal(0,0.5^2)$, $\vect(\bfB_1)\sim\Normal(0,0.5^2)$, $\vect(\bfbeta)\sim\Normal(0,1^2)$, and $\bfgamma\sim\Normal_{2}(\mathbf{0},\bfI_2)$. We floor shares at $\epsprob=10^{-10}$ before row-renormalization and floor Dirichlet shapes at $\epsshape=10^{-10}$ inside predictive calculations. These constants only stabilize $\log$ and $\Gamma$ evaluations and never bind at the scales in H.8.

\paragraph{Estimation.}
Let $T$ be the post-trim sample size. We use a rolling one-step evaluation over the most recent $T_{\text{test}}=\min\{104,\lfloor 0.25T\rfloor\}$ weeks, with weekly re-estimation at each origin. Define weekly origins $t_0$ from $\max\{T-T_{\text{test}},\text{min\_train}\}$ to $T-1$. At each origin we restandardize $z_t$ using $t\le t_0$ only, refit both specifications on $1{:}t_0$ in \texttt{Stan} with identical settings (2 chains, 1{,}200 iterations, 600 warmup, $\texttt{adapt\_delta}=0.95$, $\texttt{max\_treedepth}=12$, $\texttt{init}=0$, single-threaded math), and forecast $\bfy_{t_0+1}$. Both specifications use the same random seed at each origin (seed-symmetric protocol), so any difference in HMC behavior is attributable to posterior geometry rather than to seed-induced variability. We do not enable auto-refits; this preserves a clean comparison of sampler behavior under a fixed compute budget. (An earlier draft of this paper enabled auto-refits triggered by divergences, which mechanically reduced the divergence asymmetry; removing that policy makes the geometric difference between specifications more transparent, not less.)

\paragraph{Diagnostics (reported for each fit).}
For transparency and reproducibility, we log and report: (i) the number of HMC divergent transitions per fit; (ii) the share of iterations hitting the maximum treedepth; (iii) the potential-scale-reduction factor $\widehat{R}$ \citep{vehtari2021ranknormalization} and the bulk effective sample size (ESS) for monitored parameters. The potential-scale-reduction factor $\widehat{R}$ compares within-chain to between-chain variance and should be close to one at convergence; bulk ESS measures the effective number of independent draws available for posterior summaries of bulk quantities (means, medians, quantiles), accounting for autocorrelation in the Markov chain. These diagnostics are summarized alongside forecasting metrics in Tables~\ref{tab:rolling-summary} and \ref{tab:hmc-diagnostics}.

\paragraph{Forecasting and scoring.}
One-step point means on the simplex are obtained by propagating posterior draws through the state recursion and $\alrinv(\cdot)$. For density scoring we use a mixture-of-parameters approximation:
\[
\lpd_t = \log\!\left[\frac{1}{S}\sum_{s=1}^S f_{\Dir}\!\big(\bfy_t\,\big|\,\phi_t^{(s)}\bfmu_t^{(s)}\big)\right],\qquad \elpd=\sum_{t\in\mathcal{T}}\lpd_t,
\]
with $S=200$ posterior draws per origin. Predictive $95\%$ coverage is reported as a \emph{componentwise marginal} calibration diagnostic (not joint simplex coverage): at each test time we simulate $\bfy_t^{\text{rep}}$ from the Dirichlet at each draw via gamma normalization, form the central interval per component, and average the inclusion indicator across components and across the test window. Point errors are summarized by RMSE and MAE on the full composition.

\section{Results}\label{sec:results}

\paragraph{Headline findings.}
The data window is locked to October~2015 through October~2025 ($T=522$ weekly observations). The rolling one-step evaluation refits both specifications at each of $104$ weekly origins under the seed-symmetric, no-auto-refit protocol described in Section~\ref{sec:spec-est}. The two specifications produce statistically indistinguishable predictive distributions, but the centered specification has substantially better HMC sampling geometry. The cumulative ELPD difference (Centered minus Raw) is $+0.37$ nats across all $104$ origins (mean per-origin difference $+0.0035$, sd $0.031$, Centered wins on $59$ of $104$ origins); per-origin RMSE and MAE are essentially tied. Total HMC divergent transitions, however, drop from $446$ across $45$ origins for the raw specification to $34$ at $17$ origins for the centered specification, a ratio of approximately $13$:$1$ at fixed sampler settings, with a single origin in February~2025 generating $211$ raw divergences and a maximum $\widehat{R}$ of $1.58$ for the raw specification at that origin (Table~\ref{tab:rolling-summary}; Figures~\ref{fig:cum-elpd}--\ref{fig:rolling-divergences}).

\paragraph{Predictive performance is statistically indistinguishable.}
Figure~\ref{fig:cum-elpd} plots the cumulative difference in log score, $\sum_{s\le t}\{\elpd_s^{\text{Centered}}-\elpd_s^{\text{Raw}}\}$, across the $104$ weekly origins. The curve oscillates near zero across the test window, ending at $+0.37$ nats; the per-origin standard deviation is $0.031$ nats, so the cumulative trend lies well within sampling noise relative to a null of equivalence. Total-share RMSE (Figure~\ref{fig:rolling-rmse}) for the two specifications lies almost exactly on top of itself: both spike in the same weeks and revert together. This is the high-precision regime in which Theorem~\ref{thm:equiv} establishes recursion-level equivalence at fixed parameters: with the realized $\phi_t$ above $80$ throughout the test window, the $O(1/\phi)$ correction is small, and the empirical finding of statistically indistinguishable one-step predictive distributions across re-estimated posteriors is consistent with that recursion-level result (though, as we discuss in Section~\ref{sec:scope}, not directly implied by it).

\paragraph{Sampling geometry differs sharply.}
Diagnostics in the rolling refits favor the centered specification dramatically. Across the $104$ origins, the centered specification produces $34$ divergent transitions across $17$ origins; the raw specification produces $446$ divergences across $45$ origins (Figure~\ref{fig:rolling-divergences}). The ratio is approximately $13$:$1$. Two origins are particularly costly for the raw specification: February~19, 2025 produces $211$ raw divergences (the worst single origin) and June~5, 2024 produces $84$, with the centered specification clean or near-clean at the same origins under identical seeds and data. Maximum $\widehat{R}$ across rolling fits reaches $1.58$ for the raw specification versus $1.03$ for the centered specification; minimum bulk ESS reaches $5$ for raw versus $83$ for centered (Table~\ref{tab:hmc-diagnostics}). The two specifications encode the same predictive content, but the centered specification produces a posterior that the sampler can navigate cleanly, while the raw specification produces local geometric pathologies that the sampler intermittently fails to negotiate.

\paragraph{Coverage (componentwise marginal calibration).}
The reported $95\%$ coverage is a \emph{componentwise marginal} calibration diagnostic, not a joint coverage of the full simplex predictive distribution: at each test time we form the central $95\%$ predictive interval for each of the four components $(\text{cash}, \text{securities}, \text{loans}, \text{other})$ separately and average the inclusion indicator across the $T_{\text{test}}\times J$ component-time pairs. Averaged across origins, componentwise marginal coverage is $0.957$ (Centered) versus $0.954$ (Raw), both close to nominal and statistically indistinguishable. Coverage is not a discriminating metric in this application.

\subsection{Sensitivity to ALR reference choice}\label{sec:sensitivity}

The main analysis uses loans as the ALR reference component on the grounds that loans is the largest and most stable component, which keeps the ALR coordinates well-scaled. To assess whether the predictive-equivalence and geometric-advantage findings depend on this choice, we repeat the rolling one-step evaluation with each of the other three components in turn as the reference: cash, securities, and other. Each sensitivity run uses the same locked data window, the same model specifications, the same sampler settings, and the same seed-symmetric, no-auto-refit protocol; only the ALR reference changes.

Table~\ref{tab:sensitivity-refs} summarizes the four references. Two patterns are evident.

First, predictive equivalence is reference-invariant. Cumulative rolling ELPD differences range from $+0.10$ nats (securities) to $+0.37$ nats (loans) across $104$ origins; the per-origin mean differences are within $\pm 0.004$ nats throughout; per-origin standard deviations are $0.03$--$0.05$. The centered specification wins on $59$, $60$, $61$, and $59$ of $104$ origins respectively, essentially the same across references. This is consistent with the recursion-level result in Theorem~\ref{thm:equiv}, which is a statement that does not depend on the ALR reference; the empirical reference-invariance is established directly from the four rolling experiments.

The equivalence is more general than the cumulative-ELPD summary suggests. We computed paired within-reference Centered-minus-Raw differences at each of the $104$ origins for total-share RMSE, total-share MAE, one-step ELPD, and componentwise marginal $95\%$ coverage, separately at each reference, and tested each paired sequence against a null of zero mean. No paired comparison reaches conventional significance at any reference on any metric: across the sixteen tests (four references $\times$ four metrics), the $p$-values range from $0.16$ to $0.99$ and the absolute $t$-statistics never exceed $1.42$. The directions of the small paired differences flip in sign across references: the centered specification is directionally better than raw on point error at the cash and securities references and directionally worse at the other and loans references, with all differences at the sixth or seventh decimal place on RMSE and MAE and within $\pm 0.004$ nats per origin on ELPD. Componentwise marginal coverage is tied on at least $96$ of $104$ origins at every reference. The within-reference predictive picture, on every accuracy metric we examined, is statistical equivalence between the two specifications, consistent with the recursion-level result in Theorem~\ref{thm:equiv}. Table~\ref{tab:paired-elpd-ci} summarizes the per-origin paired ELPD differences with standard errors and $95\%$ confidence intervals; all four intervals comfortably include zero. The cross-reference spread in absolute accuracy is small (per-origin mean ELPD differs by $0.04$ nats across the four references, per-origin RMSE by $1.3\%$ in relative terms) and applies to both specifications symmetrically; it reflects the ALR parameterization itself rather than the methodology under study.

\begin{table}[!t]
\centering
\caption{Per-origin paired ELPD differences (Centered minus Raw) across the four ALR references. Standard errors and 95\% confidence intervals are computed under the paired-$t$ framework. All four intervals include zero.}
\label{tab:paired-elpd-ci}
\begin{tabular}{lrrrcrr}
\toprule
Reference & $n$ & Cum.\ diff & Mean diff & 95\% CI & $t$ & $p$ \\
\midrule
Loans (main) & 104 & $+0.367$ & $+0.00353$ & $[-0.00244,\ +0.00949]$ & $+1.171$ & $0.244$ \\
Cash         & 104 & $+0.116$ & $+0.00111$ & $[-0.00664,\ +0.00886]$ & $+0.285$ & $0.777$ \\
Securities   & 104 & $+0.097$ & $+0.00093$ & $[-0.00454,\ +0.00641]$ & $+0.339$ & $0.736$ \\
Other        & 104 & $+0.232$ & $+0.00223$ & $[-0.00686,\ +0.01133]$ & $+0.487$ & $0.628$ \\
\bottomrule
\end{tabular}

\vspace{0.4em}
\footnotesize\emph{Notes}: ``Cum.\ diff'' is the sum of per-origin Centered-minus-Raw one-step ELPD differences across the rolling window. Mean diff, standard error (used in the $t$-statistic and CI), $t$-statistic, and two-sided $p$-value are computed under a paired-$t$ null of zero mean per-origin difference, with $n-1=103$ degrees of freedom.
\end{table}

Second, the geometric advantage of centering is preserved across references but varies in magnitude. At the loans reference (the main analysis), raw produces $446$ divergences across rolling fits versus $34$ for centered (ratio $\approx 13$:$1$), with two extreme single-origin spikes. At the other reference, raw produces $186$ versus centered's $60$ (ratio $\approx 3$:$1$); the worst single raw origin produces $22$ divergences. At the securities reference, raw produces $33$ versus centered's $11$ (ratio $\approx 3$:$1$); the worst raw fit produces only $4$ divergences. At the cash reference, raw produces $29$ versus centered's $28$, a tie; the worst raw fit produces $7$ divergences and the worst centered fit produces $5$. The pattern is not monotone in any obvious property of the reference. What it tracks is whether the raw specification produces localized catastrophic single-origin failures: when raw produces extreme single-fit divergence spikes (as at the loans and, to a lesser extent, other references), centering provides a substantial geometric benefit; when both specifications stay well-behaved across all origins, they tie. The centering benefit is thus best characterized as tail-risk reduction in HMC sampling, not as a uniform multiplicative reduction in divergence rates.

\begin{table}[!t]
\centering
\caption{Sensitivity of rolling one-step results to the ALR reference choice. Locked data window, same model and sampler, seed-symmetric protocol.}
\label{tab:sensitivity-refs}
\resizebox{\linewidth}{!}{%
\begin{tabular}{lcccc}
\toprule
& Loans (main) & Other & Securities & Cash \\
\midrule
Rolling origins                       & 104 & 104 & 104 & 104 \\
Cumulative ELPD diff (C $-$ R)        & $+0.37$ & $+0.23$ & $+0.10$ & $+0.12$ \\
Per-origin mean ELPD diff             & $+0.0035$ & $+0.0022$ & $+0.0009$ & $+0.0011$ \\
Per-origin SD of ELPD diff            & $0.031$ & $0.047$ & $0.028$ & $0.040$ \\
Centered wins on ELPD                 & $59$ & $60$ & $61$ & $59$ \\
Total rolling divergences (Centered)  & $34$ & $60$ & $11$ & $28$ \\
Total rolling divergences (Raw)       & $\mathbf{446}$ & $186$ & $33$ & $29$ \\
Raw/Centered divergence ratio         & $13.1$ & $3.1$ & $3.0$ & $1.04$ \\
Origins with any divergence (Centered)& $17$ ($16\%$) & $42$ ($40\%$) & $11$ ($11\%$) & $23$ ($22\%$) \\
Origins with any divergence (Raw)     & $45$ ($43\%$) & $72$ ($69\%$) & $23$ ($22\%$) & $18$ ($17\%$) \\
Worst single-origin raw divergences   & $\mathbf{211}$ (Feb 2025) & $22$ (May 2024) & $4$ & $7$ \\
Max raw $\widehat R$                  & $1.58$ & $1.03$ & $1.02$ & $1.09$ \\
Min raw bulk ESS                      & $5$ & $48$ & $142$ & $28$ \\
\bottomrule
\end{tabular}%
}

\vspace{0.4em}
\footnotesize\emph{Notes}: Loans reference is the main analysis (Section~\ref{sec:results}). Other, securities, and cash are sensitivity runs using the same locked data window (October~2015 through October~2025), the same model specifications, the same sampler settings, and the seed-symmetric protocol. The ``Raw/Centered divergence ratio'' is computed from the totals.
\end{table}

Taken together, the four-reference picture supports a sharper interpretation than the single-reference main analysis alone: predictive performance is statistically tied between centered and raw specifications regardless of reference, while the geometric advantage of centering emerges specifically at references and origins where the raw specification is prone to localized posterior pathologies. The loans reference is the most dramatic case in the H.8 data because it is the case where raw produces the most extreme single-origin failures (notably the February~2025 fit). At references where both specifications stay relatively well-behaved, predictive equivalence is preserved and the geometric advantage shrinks toward parity. Across the four references examined, raw never shows a systematic predictive advantage, and the diagnostic comparison is either favorable to centered or effectively tied, depending on the ALR reference.

Coupled with the headline H.8 results, these findings are consistent with Theorem~\ref{thm:equiv}: the centered and raw specifications agree on predictive distributions to first order in $1/\phi$ regardless of ALR reference, while they can differ sharply in the geometry of the implied posterior depending on whether the raw specification encounters localized pathological fits.

\FloatBarrier

\section{Discussion}\label{sec:discussion}

The empirical patterns align closely with the theoretical properties of the centered MA construction. Under a Dirichlet likelihood with finite precision $\phi_t$, the conditional expectation of the ALR-transformed observation is a digamma function of the concentration parameters. The raw MA regressor, $\alr(\bfy_t)-\bfeta_t$, therefore has a nonzero conditional mean of order $O(\phi_t^{-1})$ whenever $\bfeta_t$ is interpreted as the ALR linear predictor. Theorem~\ref{thm:equiv} shows that this shift propagates to the conditional mean only at first order in $1/\phi$, which is consistent with the empirical finding that one-step predictive densities are statistically indistinguishable across specifications in the H.8 application across all four ALR references. The shift nonetheless distorts the local geometry of the posterior for the MA coefficients, because the regressor systematically drifts away from zero while the priors implicitly anchor it there.

The fact that RMSE, MAE, ELPD, and coverage are virtually identical across specifications is in line with what the recursion-level result in Theorem~\ref{thm:equiv} suggests for this regime. The two specifications share the same conditional mean to first order in $1/\phi$ at fixed parameters; the difference is in how they treat the transitory shocks that drive the MA component, and the recursion-level difference is of order $1/\phi$. In the H.8 application, $\phi_t$ stays above approximately $80$ throughout, so the correction is empirically small. The rolling results establish empirically that the re-estimated Centered-MA and Raw-MA posterior predictive distributions are statistically indistinguishable in this application; centering acts as a reparameterization that preserves the implied conditional mean to leading order while affecting the conditioning of the posterior with respect to the MA coefficients. This is the kind of structure-preserving improvement common to centered-versus-noncentered parameterizations elsewhere in Bayesian time series, such as in stochastic volatility and dynamic linear models.

The computational consequences are material in practice. Divergent transitions in Hamiltonian Monte Carlo signal that the posterior geometry has narrow, curved, or funnel-like regions. In the raw-MA specification, the MA block must reconcile a regressor that is anchored away from zero by construction with priors that implicitly shrink toward stationarity and with an ALR mean that already absorbs part of the dynamics through the AR and covariate terms. This creates unnecessary tension in the joint posterior for $(\bfB,\bfbeta,\bfgamma)$, especially when $\phi_t$ varies over time. By removing the deterministic shift from the MA regressor, the centered formulation flattens these curvatures and makes it easier for the sampler to explore the posterior with fewer divergences and fewer treedepth saturations. The improvement persists in the rolling evaluation, where we use a deliberately lighter sampler to make the exercise feasible.

It is worth being precise about where the geometric advantage does and does not appear. In the H.8 application at the loans reference, the divergence asymmetry is dramatic: $446$ versus $34$ across $104$ rolling origins, driven in substantial part by a single February~2025 origin where the raw specification produces $211$ divergences with $\widehat R=1.58$ while the centered specification produces zero divergences at the same seed and data. The sensitivity analysis (Section~\ref{sec:sensitivity}) shows that this asymmetry is preserved across alternative ALR references but varies in magnitude: approximately $3$:$1$ at the other and securities references, and effectively tied at the cash reference. The cleanest manifestation of the geometric advantage thus emerges where the raw specification produces localized catastrophic single-origin failures. We do not claim that the $13$:$1$ H.8 ratio is a property of all Dirichlet ARMA applications. Across the four ALR references we examined, raw never shows a systematic predictive advantage, and the diagnostic comparison is either favorable to centered (loans, securities, other) or effectively tied (cash), with the magnitude of any geometric benefit scaling with the prevalence of localized posterior pathologies in the raw specification.

For treasury and risk teams, the geometric advantage is operational rather than cosmetic. Posterior simulation feeds downstream stress tests, scenario analyses, and asset-and-liability planning workflows; pathological sampling at any individual origin propagates into unreliable downstream summaries. A specification that avoids the catastrophic divergence spikes that occur intermittently under the raw specification, without operator intervention, is materially more useful than one that intermittently requires manual diagnostic remediation, even if the two encode the same predictive content.

Several limitations qualify these findings and suggest next steps. First, the recursion-level theorem assumes a pathwise interior condition (Assumption~\ref{ass:interior}) that the realized centered mean stays bounded away from the simplex boundary. We check this directly on the H.8 fits, but full primitive conditions for stationarity and ergodicity of observation-driven Dirichlet ARMA models remain open, beyond the special cases established by \citet{zheng2017dirichlet}. Second, the present evaluation focuses on one-step density forecasts; multi-step predictive performance on H.8 would extend the empirical picture. Third, the precision covariate is a simple lagged smoother of realized ALR volatility; richer precision dynamics, including component-specific or seasonal volatility, could matter during prolonged stress. Fourth, while we examine four ALR references explicitly (Section~\ref{sec:sensitivity}), isometric log-ratios or sequential binary partitions may further stabilize inference when no single component dominates the simplex. Fifth, structural zeros are not modeled explicitly; zero-aware designs could be integrated with the centering idea. Sixth, hierarchical or panel versions that pool information across banks, sectors, or geographies would test whether the geometric advantages of centering scale in higher dimensions.

In sum, the H.8 application and its four-reference sensitivity analysis demonstrate that the theoretical first-order equivalence between centered and raw specifications is borne out in practice: predictive distributions are statistically indistinguishable across reference choices, while HMC sampling geometry is effectively tied or cleaner under centering, with the largest gains appearing where the raw specification produces localized catastrophic fits. Given the analytic correction, the minimal code change, and the absence of a predictive trade-off in the H.8 application, centering is a natural default candidate for MA terms in observation-driven Dirichlet ARMA models, recommended especially when posterior simulation quality matters.

\clearpage

\section*{Code and reproducibility}\label{sec:code}
All code, frozen input data, locked-window outputs, and instructions to reproduce the analysis are available at \url{https://github.com/harrisonekatz/centered-DARMA}. The main manuscript reproduction entry point is \texttt{centered\_DARMA\_main.R}. This script reads the frozen FRED CSV snapshots in \texttt{data/}, asserts the locked October 7, 2015 through October 1, 2025 window, runs the seed-symmetric no-auto-refit loans-reference experiment, and writes the manuscript tables and figures to \texttt{results/}. The four-reference sensitivity analysis is reproduced with \texttt{centered\_DARMA\_sensitivity.R}; additional plotting and summary scripts are in \texttt{scripts/}. A live-FRED download path is retained only for exploratory reruns and is not used for the reported results. The implementation uses base R and \texttt{rstan} \citep{Rbase, Rstan}, plus standard data and plotting packages. No private data or credentials are required.

\section*{Acknowledgement}
The author thanks Sean Wilson, Jess Needleman, and Liz Medina for helpful discussions, and Adam Liss for championing the research.

\begin{figure}[!t]
  \centering
  \includegraphics[width=\textwidth]{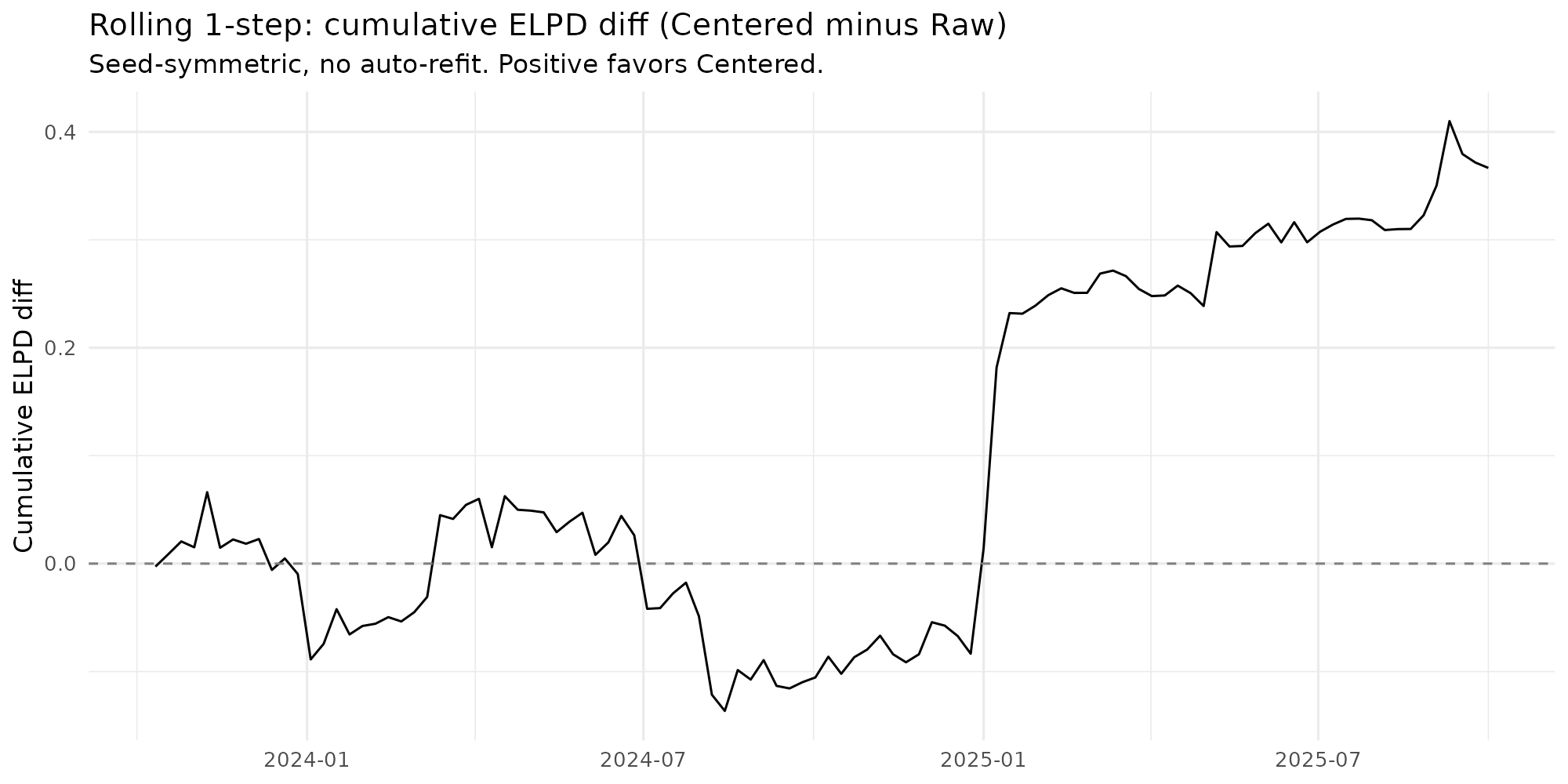}
  \caption{Rolling one-step cumulative ELPD difference (Centered $-$ Raw) at the loans reference under the seed-symmetric, no-auto-refit protocol. The cumulative difference oscillates near zero across the test window, ending at $+0.37$ nats across $104$ origins. The trend is statistically indistinguishable from zero given the per-origin standard deviation of $0.031$ nats, consistent with first-order predictive equivalence.}
  \label{fig:cum-elpd}
\end{figure}

\begin{figure}[!t]
  \centering
  \includegraphics[width=\textwidth]{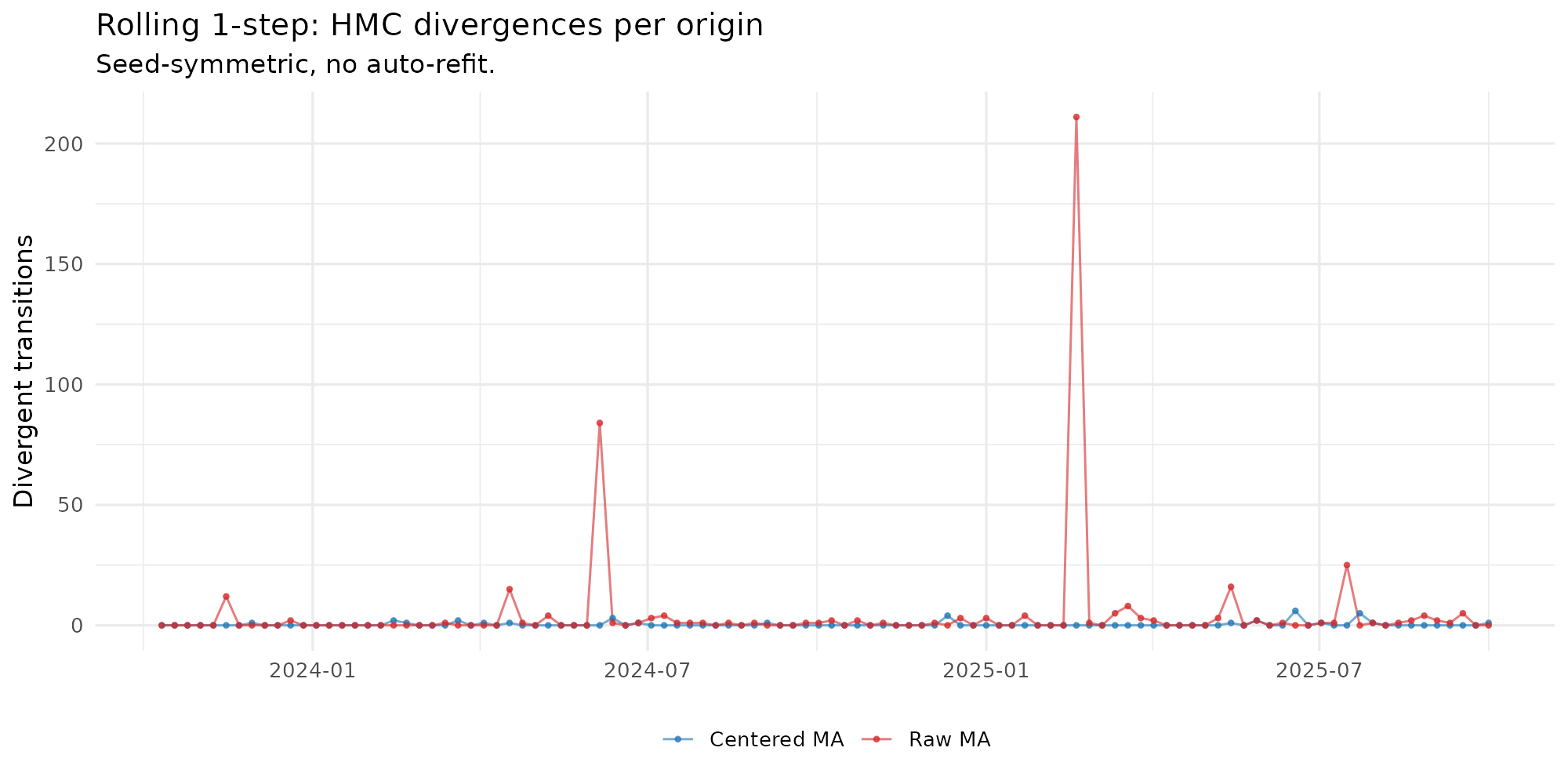}
  \caption{Rolling one-step HMC divergent transitions per origin at the loans reference (Centered MA in blue, Raw MA in red). Under the seed-symmetric, no-auto-refit protocol, the centered specification produces $34$ divergences across $17$ origins; the raw specification produces $446$ across $45$ origins. A single origin in February~2025 generates $211$ of the raw specification's divergences with a worst-case $\widehat R$ of $1.58$, with the centered specification clean at the same origin.}
  \label{fig:rolling-divergences}
\end{figure}

\begin{figure}[!t]
  \centering
  \includegraphics[width=\textwidth]{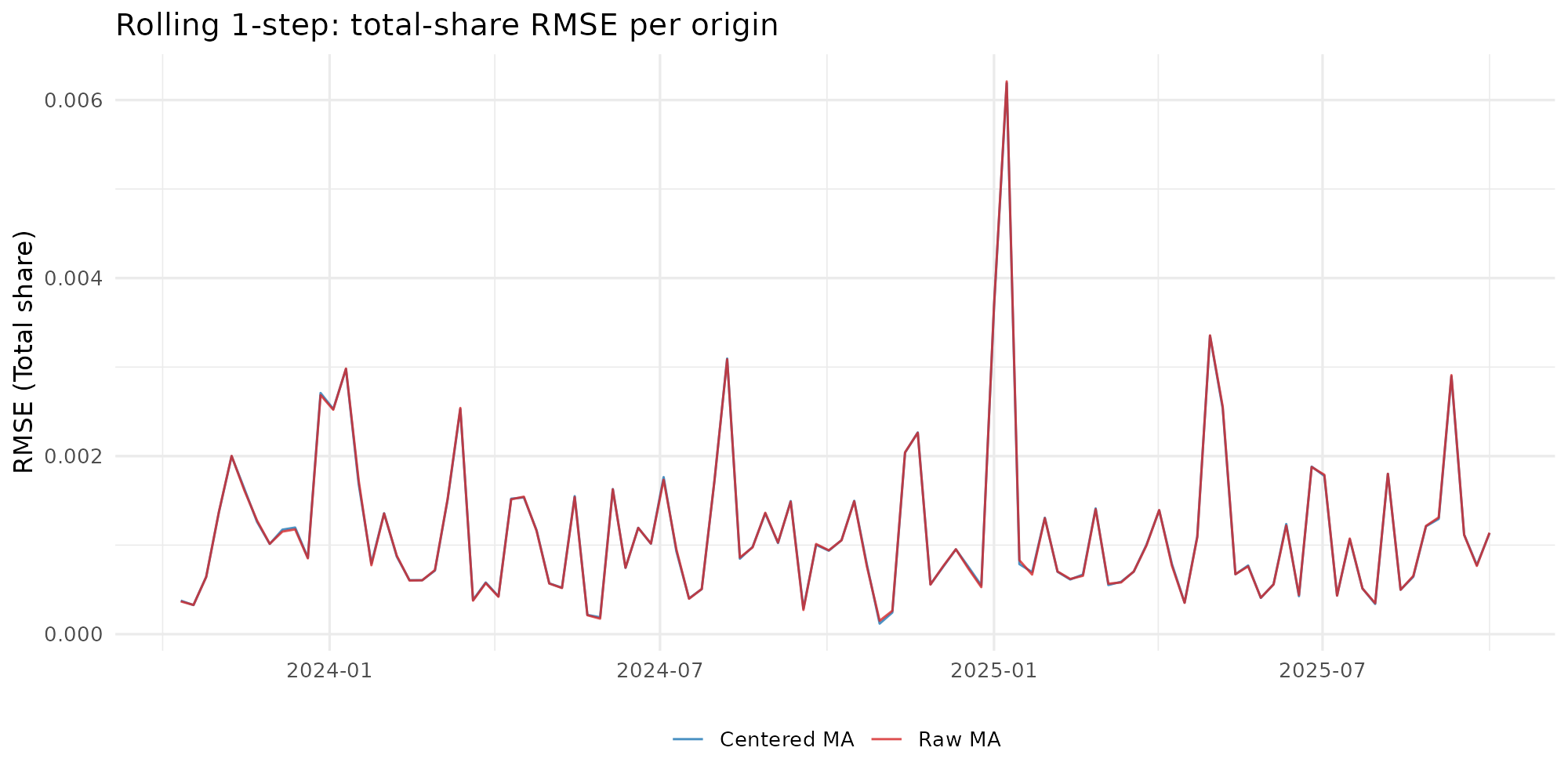}
  \caption{Rolling one-step total-share RMSE by origin at the loans reference. The two series are nearly indistinguishable across the test window.}
  \label{fig:rolling-rmse}
\end{figure}

\begin{figure}[!t]
  \centering
  \includegraphics[width=\textwidth]{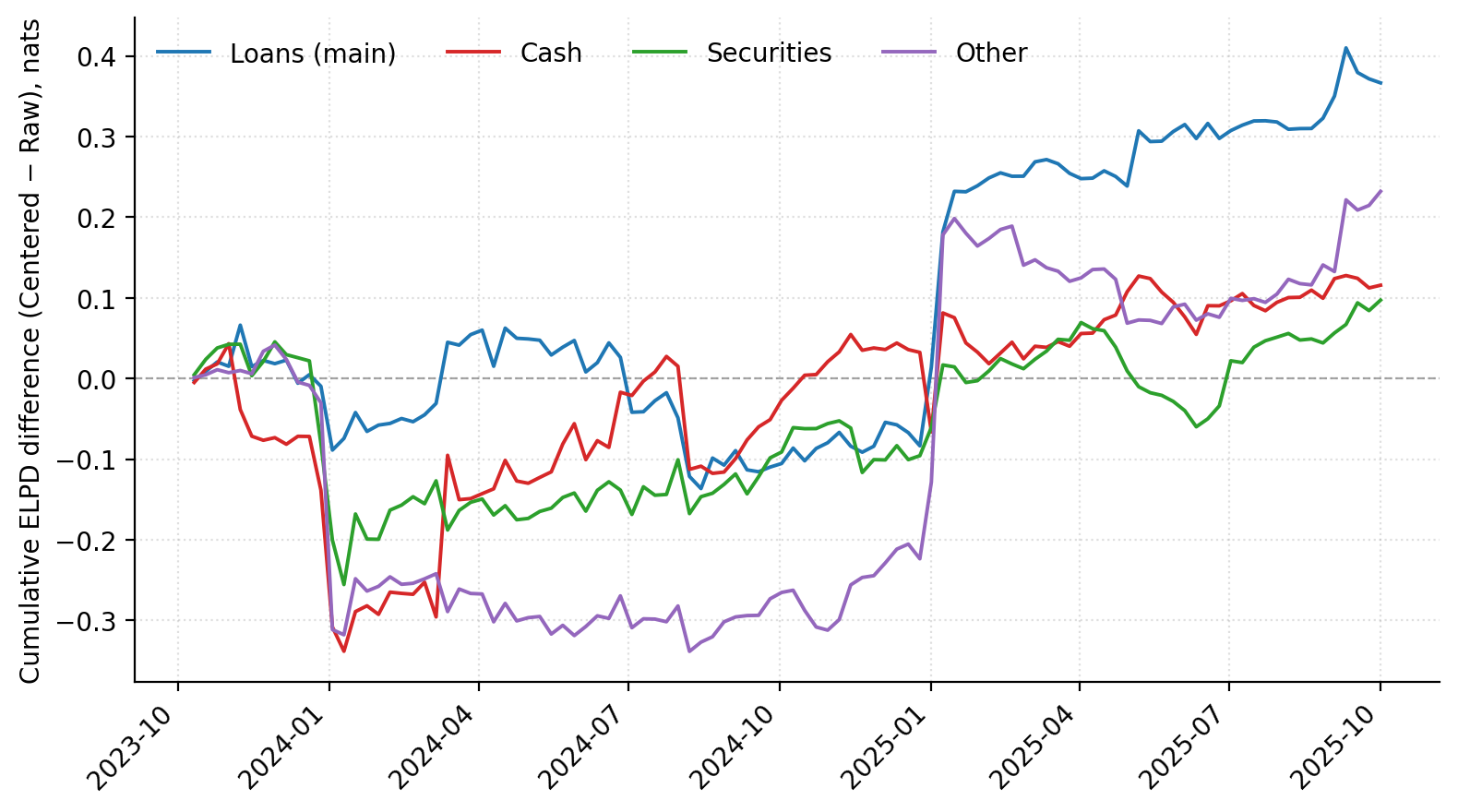}
  \caption{Sensitivity to ALR reference: rolling one-step cumulative ELPD difference (Centered $-$ Raw) across the four ALR reference choices (cash, securities, loans, other). All four curves stay within $\pm 0.4$ nats over $104$ origins, consistent with first-order predictive equivalence at every reference.}
  \label{fig:sensitivity-cumelpd}
\end{figure}

\begin{figure}[!t]
  \centering
  \includegraphics[width=\textwidth]{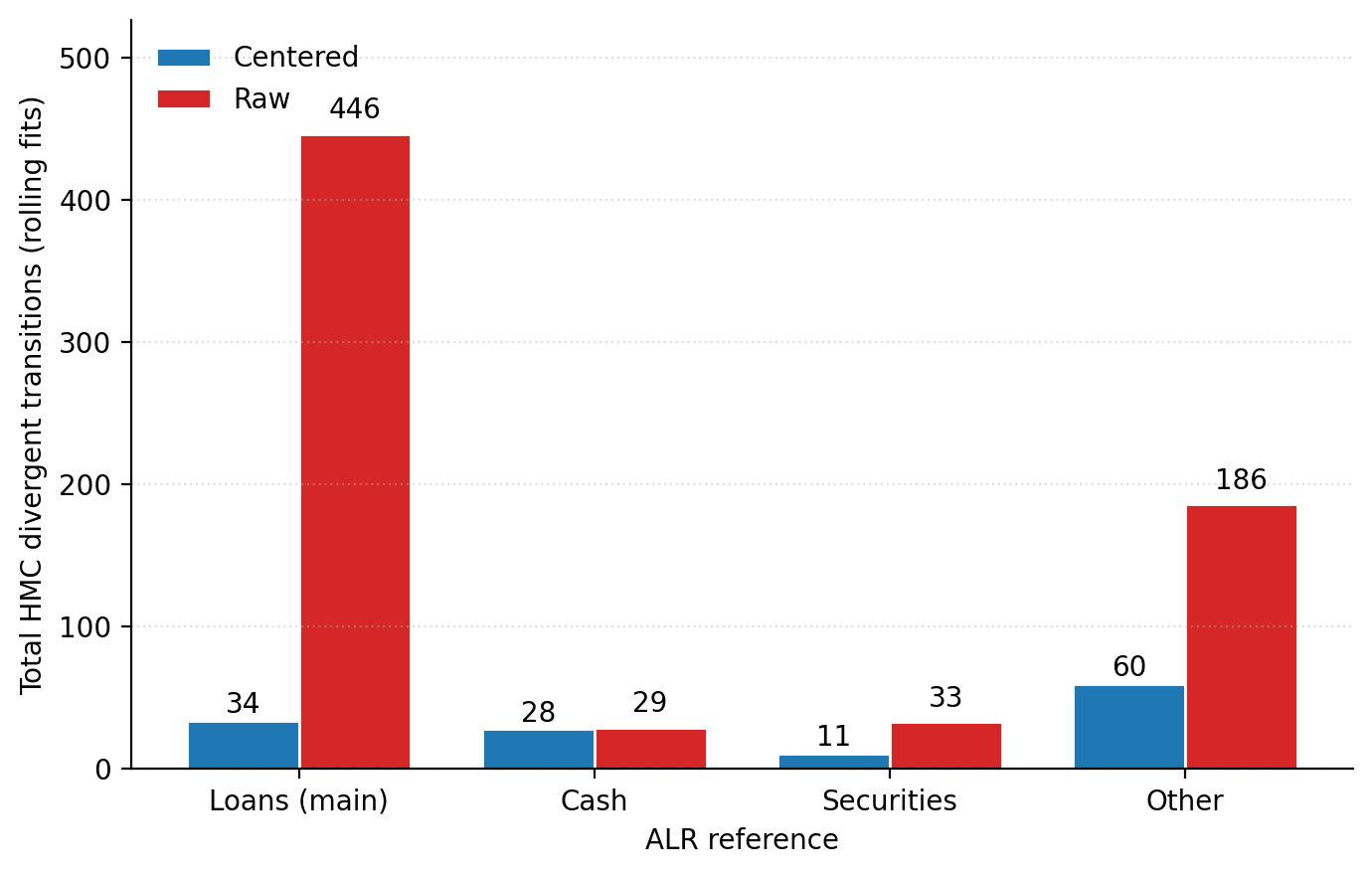}
  \caption{Sensitivity to ALR reference: total rolling-fit HMC divergences for Centered and Raw specifications at each reference. The geometric advantage of centering is largest at the loans reference (ratio $\approx 13$:$1$) and shrinks toward parity at the cash reference, while remaining non-negative everywhere.}
  \label{fig:sensitivity-divergences}
\end{figure}

\begin{table}[!t]
\centering
\caption{Rolling one-step summary over 104 weekly origins (seed-symmetric, no auto-refit).}
\label{tab:rolling-summary}
\resizebox{\linewidth}{!}{%
\begin{tabular}{lcccc}
\toprule
 & Centered MA & Raw MA & Difference & Notes \\
\midrule
ELPD (sum)              & $1{,}589.76$ & $1{,}589.40$ & $+0.37$ & mean diff $+0.0035$, sd $0.031$ \\
Wins on ELPD            & \multicolumn{2}{c}{$59$ vs $45$ (ties: $0$)} & & per-origin\\
RMSE (mean)             & $1.171\!\times\!10^{-3}$ & $1.171\!\times\!10^{-3}$ & $\approx 0$ & per-origin mean \\
MAE (mean)              & $9.94\!\times\!10^{-4}$ & $9.94\!\times\!10^{-4}$ & $\approx 0$ & per-origin mean \\
$95\%$ Coverage (mean)  & $0.957$ & $0.954$ & $+0.003$ & componentwise across cash, securities, loans, other\\
Divergences (total)     & $\mathbf{34}$ & $\mathbf{446}$ & -- & across 104 fits \\
Origins with any divergence & $17$ ($16.3\%$) & $45$ ($43.3\%$) & -- & \\
\bottomrule
\end{tabular}%
}
\end{table}

\begin{table}[!t]
\centering
\caption{HMC sampling diagnostics across 104 rolling origins (loans reference).}
\label{tab:hmc-diagnostics}
\begin{tabular}{lcccc}
\toprule
 & \multicolumn{2}{c}{Centered MA} & \multicolumn{2}{c}{Raw MA} \\
\cmidrule(lr){2-3}\cmidrule(lr){4-5}
 & mean & worst & mean & worst \\
\midrule
$\widehat{R}_{\max}$        & $1.007$ & $1.025$ & $1.017$ & $1.583$ \\
$\text{ESS}_{\min}$ (bulk)  & $327$   & $83$    & $275$   & $5$    \\
Divergent transitions       & $0.33$  & $7$     & $4.29$  & $211$    \\
Treedepth saturations       & $0$     & $0$     & $0$     & $0$     \\
\bottomrule
\end{tabular}

\vspace{0.4em}
\footnotesize\emph{Notes}: Mean and worst-case (max for $\widehat{R}_{\max}$ and divergences; min for $\text{ESS}_{\min}$) values across $104$ weekly fits at the loans reference under identical priors, sampler settings, and random seed for both specifications.
\end{table}

\clearpage

\bibliographystyle{chicago}

\end{document}